\renewcommand{\emph}[1]{\textbf{\textit{#1}}}
\newcommand{\ifconference}[1]{{{\ifx\fullversion\undefined{#1}\fi}\xspace}}
\newcommand{\iffullversion}[1]{{{\ifx\conference\undefined{#1}\fi}\xspace}}
\newcommand{\hide}[1]{} 
\newcommand{\algname}[1]{{\textsf{#1}}} 
\newcommand{\defn}[1]{\emph{\textbf{#1}}} 
\newcommand{\fname}[1]{\textsf{#1}} 
\let \originalleft \left
\let\originalright\right
\renewcommand{\left}{\mathopen{}\mathclose\bgroup\originalleft}
\renewcommand{\right}{\aftergroup\egroup\originalright}
\newcommand{\whp}[1]{\emph{whp}}
\setlist{topsep=0.3em,itemsep=0.2em,parsep=0.1em,leftmargin=*}
\newcolumntype{L}[1]{>{\raggedright\let\newline\\\arraybackslash\hspace{0pt}}m{#1}}
\newcolumntype{C}[1]{>{\centering\let\newline\\\arraybackslash\hspace{0pt}}m{#1}}
\newcolumntype{R}[1]{>{\raggedleft\let\newline\\\arraybackslash\hspace{0pt}}m{#1}}
\newcolumntype{B}{>{\bf}c}
\newcommand{\myparagraph}[1]{\smallskip\noindent\textbf{#1}~~}
\newcommand{\nosemic}{\renewcommand{\@endalgocfline}{\relax}}
\newcommand{\dosemic}{\renewcommand{\@endalgocfline}{\algocf@endline}}
\newdimen\zzsize
\newdimen\kwsize
\newcommand{\basicstyle}{\fontsize{\zzsize}{1\zzsize}\ttfamily}
\newcommand{\keywordstyle}{\fontsize{\kwsize}{1\kwsize}\ttfamily\bf}
\newdimen\zzlstwidth
\crefname{section}{Section}{Section}
\crefname{theorem}{Theorem}{Theorem}
\crefname{thm}{Theorem}{Theorem}
\crefname{lemma}{Lemma}{Lemma}
\crefname{corollary}{Corollary}{Corollary}
\crefname{table}{Table}{Table}
\crefname{algorithm}{Algorithm}{Algorithm}
\crefname{figure}{Figure}{Figure}
\crefname{fact}{Fact}{Fact}
\Crefname{table}{Table}{Table}
\crefname{problem}{Problem}{Problem}
\def\showcommends{0} 
  \newcommand{\ryuto}[1]{{\leavevmode\color{red}Ryuto: #1}}
  \newcommand{\yihan}[1]{{\leavevmode\color{blue}Yihan: #1}}
  \newcommand{\yan}[1]{{\leavevmode\color{violet}Yan: #1}}
  \newcommand{\ryuto}[1]{}
  \newcommand{\yihan}[1]{}
  \newcommand{\yan}[1]{}
\newcommand{\ourmodel}{Fork-Join I/O Model} 
\newcommand{\join}{join}
\newcommand{\funcjoin}{\algname{Join}}
\newcommand{\funcsplit}{\algname{Split}}
\newcommand{\fastunion}{\algname{Union}}
\newcommand{\union}{\algname{Work-Inefficient-Union}}
\newcommand{\bjoin}{\algname{B-Way-Join}}
\newcommand{\intersection}{\algname{Intersection}}
\newcommand{\difference}{\algname{Difference}}
\newcommand{\mjoin}{\algname{Multi-Join}}
\newcommand{\msplit}{\algname{Multi-Split}}
\newcommand{\tsplit}{\algname{Thread-Split}}
\newcommand{\bjoinfast}{\algname{B-Way-Join-Fast}}
\newcommand{\joinwithtall}{\algname{Join-With-Tall}}
\newcommand{\prefixsum}{\algname{Prefix-Sum}}
\newcommand{\dividenode}{\algname{Divide-Node}}
\newcommand{\twobinarysearch}{\algname{Two-Way-Binary-Search}}
\newcommand{\threadgather}{\algname{Gather}}
\newcommand{\treejoinset}{X}
\newcommand{\hmax}{h^*}
\newcommand{\successor}{\fname{succ}}
\title{Parallel Joinable B-Trees in the \ourmodel{}}
\author{Michael T. Goodrich}{University of California, Irvine}{goodrich@uci.edu}{https://orcid.org/0000-0002-8943-191X}{}
\author{Yan Gu}{University of California, Riverside}{ygu@cs.ucr.edu}{https://orcid.org/0000-0002-4392-4022}{}
\author{Ryuto Kitagawa}{University of California, Irvine}{ryutok@uci.edu}{https://orcid.org/0009-0000-7329-9590}{}
\author{Yihan Sun}{University of California, Riverside}{yihans@cs.ucr.edu}{https://orcid.org/0000-0002-3212-0934}{}
\authorrunning{M. Goodrich, Y. Gu, R. Kitagawa, Y. Sun}
\keywords{Parallel algorithm, I/O efficiency, search trees, B-trees} 
\begin{document}

\maketitle
\setcounter{page}{0}
\begin{abstract}
  Balanced search trees are widely used in computer science to efficiently
  maintain dynamic ordered data.  
  To support efficient set operations (e.g., union, intersection, difference) using trees, the \emph{join-based} framework is widely studied. 
  This framework has received particular attention in the parallel setting, 
  and has been shown to be effective in enabling simple and theoretically efficient set operations on trees. 
  Despite the widespread adoption of parallel \join-based trees, a major drawback of previous work on such data structures is the inefficiency of their input/output (I/O) access patterns. 
  Some recent work (e.g., C-trees and PaC-trees) focused on more I/O-friendly implementations of these algorithms. 
  Surprisingly, however, there have been no results on bounding the I/O-costs for these algorithms. It remains open whether these algorithms can provide tight, provable guarantees in I/O-costs on trees. 

  This paper studies efficient parallel algorithms for set operations based on search tree algorithms using a \join-based framework, with a special focus on achieving I/O efficiency in these algorithms. 
  To better capture the I/O-efficiency in these algorithms in parallel, 
  we introduce a new computational model, the \ourmodel{}, to measure the I/O costs in fork-join parallelism. 
  This model measures the total block transfers (I/O work) and their critical path (I/O span). 
  Under this model, we propose our new solution based on B-trees.
  Our parallel algorithm computes the union, intersection, and difference of two
  B-trees with $O(m \log_B(n/m))$ I/O work
  and $O(\log_B m \cdot \log_2 \log_B n + \log_B n)$ I/O span, 
  where $n$ and $m \leq n$ are the sizes of the two trees,
  and $B$ is the block size.
  

\end{abstract}

\newpage

\section{Introduction}
\label{sec:intro}

Balanced search trees are among the most fundamental data structures
in computer science.  The search tree structure effectively maintains
the ordering for a set of elements with dynamic updates, and a
balance guarantee, usually meaning to bound the tree height to be
logarithmic, enables efficient cost bounds for both updates and
queries.  Balanced search trees have been used to support basic
data types such as ordered sets and maps in various programming
languages, either as built-in data types or in standard libraries.

In addition to individual updates such as insertions and deletions, set
operations (e.g., set-union, corresponding to merging two trees)
are often needed in the interface of ordered sets and maps.  To
support efficient set operations on trees, many existing papers
study the \emph{join-based framework}. The core of the framework
is a function \emph{join}. $\funcjoin(T_L,k,T_R)$ takes two search trees
$T_L$ and $T_R$, and one key $k$ in the middle, such that $\forall
k_1\in T_L$ and $k_2\in T_R$, $k_1<k<k_2$, and returns a valid balanced
tree containing all elements in $T_L\cup\{k\}\cup T_R$. Existing work
has shown that abstracting the \funcjoin{} primitive allows for
elegant and efficient designs for set operations on various types
of balanced trees, both
sequentially~\cite{adams1992implementing,adams1993functional} and
in
parallel~\cite{Blelloch1998,blelloch2016just,sun2018pam,blelloch2020optimal,dhulipala2019low,dhulipala2022pac}.

The join-based framework was first studied in the sequential setting, on red-black trees~\cite{Tarjan83} and weight-balanced trees~\cite{adams1992implementing}. The idea then has received particular attention in the parallel setting, since it conceptually allows for applying a batch of updates to the tree. 
Blelloch and Reid-Miller~\cite{Blelloch1998} first used the join-based framework in parallel on treaps, and proved that they are theoretically efficient. 
This idea was extended to various balancing schemes in 2014 on a data structure called P-tree~\cite{blelloch2016just,ptreedb} and later implemented in a parallel library~\cite{sun2018pam}. The theoretical results and implementations of these trees have been used both to support strong theoretical bounds in other parallel algorithms~\cite{cao2023parallel,dong2021efficient,shen2022many,dhulipala2022hierarchical}, 
as well as in various applications such as databases~\cite{ptreedb}, graph processing~\cite{dhulipala2019low,dhulipala2019sage}, and more~\cite{sun2019implementing,wang2023fast}.

Despite the widespread adoption of parallel \join-based trees, a
major drawback of previous work on
such data structures is the inefficiency of their
input/output (I/O) access patterns; e.g., 
see~\cite{akhremtsev16}.
For example,
in a tree
node that only contains a few elements in the set, the number of memory
accesses is asymptotically the same as the operations applied, rather
than being a function of a block size or cache-line size.
In fact, many follow-up works (e.g., C-tree~\cite{dhulipala2019low} and CPAM~\cite{dhulipala2022pac}), aim to improve the memory-access efficiency (i.e., the I/O efficiency) with blocked tree nodes or leaves. 
While these new data structures have provable guarantees in work and span, and have shown good performance in experiments due to better memory access patterns, we are unaware of any theoretical improvements in their I/O efficiency. 
It is therefore worth asking, whether we can adapt the simple \join-based framework to also achieve I/O-efficiency with provable guarantees. 

This paper studies efficient parallel algorithms for set operations 
based on search tree algorithms using a \join-based framework, 
with a special focus on achieving I/O efficiency in these algorithms.
Our new solution is based on B-trees, and includes novel ideas to overcome challenges in both algorithm design and analysis for this problem.
The challenge to achieving I/O-efficiency in parallel is two-fold. 
The first reason, which lies in the algorithm design aspect, is to apply the idea to an I/O-friendly data structure, such as B-trees, when binary structures
are the norm. 
As discussed above, there exist previous studies that attempt to make \join-based algorithms I/O-friendly by grouping multiple keys as a block in the same tree node or in the leaves~\cite{dhulipala2019low,dhulipala2022pac}. 
However, they tend to still keep the binary structure of the tree, which
introduces needless complications. 
This design preserves compatibility with the existing \join-based algorithmic framework, which can be used out of the box with minimal adaptation. However, since the tree height is $O(\log(n/B))$ for block size $B$ and tree size $n$, this design does not lead to an ideal I/O bound for these set algorithms. 
To achieve non-trivial bounds, allowing multi-way trees (e.g., B-trees) is critical. However, the multi-way structure of these structures inevitably introduces complications into algorithm design. In particular, a multi-way \join{} is required, which basically concatenates a set of trees with keys in between. Accordingly, a multi-split algorithm is needed to divide a tree using multiple splitter keys. Both of them need careful algorithm re-design over the existing ideas. 

The second, and perhaps most interesting, reason related to the analysis aspect is the lack of an effective model to capture the I/O efficiency in these algorithms in parallel. 
While both models for parallel computing (e.g., PRAM) and I/O models have a long history, analyzing the I/O cost in parallel has some inherent difficulties such as shared/separate cache, synchronization schemes, exclusive/concurrent writes, etc. 
In 2010, Arge, Goodrich, and Sitchinava
introduce the parallel external-memory (PEM) 
model~\cite{DBLP:conf/ipps/ArgeGS10}, 
which focuses on the I/O bottleneck in PRAM algorithms. 
Built on top of the PRAM model, PEM assumes synchronized threads.
On the other hand, the \join-based tree algorithms, as well as their implementations in software libraries, are based on the classical fork-join model, which is highly asynchronous. 

\subsection*{Our Results} 
To accurately analyze I/O cost in asynchronous parallel algorithms, 
we introduce the \emph{\ourmodel}
in this paper, which more formally defines the I/O cost in fork-join
parallelism. Analogous to work and span for the standard fork-join
model, the \ourmodel{} measures the I/O cost of parallel algorithms by
both the total number of block transfers (referred to as \defn{I/O work}),
and the maximum number of block transfers one depends on the previous
(referred to as \defn{I/O span}). Based on this model, we provide and
analyze new parallel and I/O-efficient algorithms on set/map operations using B-trees. 

Intuitively,
the \ourmodel{} provides a 
software-based asynchronous alternative to the hardware-based 
synchronous parallel external-memory (PEM) model. 
We also present parallel I/O-efficient algorithms for B-trees in this model, such as union, intersection, and difference operations. 
To do this,
we first design a $B$-way \join{} algorithm,
which may concatenate at most $B$ trees. 
Using this primitive, we design a general \join{} algorithm that takes any number of $k$ trees and $k-1$ keys in between, and an inverse function, that splits a tree by $k-1$ keys. 
Based on the two functions, we design an I/O-efficient parallel 
set algorithms.
We also show in the Appendix how to modify the union algorithm to also perform
the intersection and difference operations,
while achieving the same I/O bounds asymptotically.

We highlight that the design of the algorithm is highly non-trivial. Directly applying the simple $B$-way \join{} (and a corresponding $B$-way split) algorithm to a \fname{Union} algorithm leads to $O(\log_B n \log_B m)$ I/O span. 
A specific interesting technical contribution in this paper is to achieve a stronger I/O span bound of $O(\log_B m \cdot \log_2 \log_B n +\log_B n)$, for which we introduce the more sophisticated algorithms of arbitrary-way \join{} 
and split.
We summarize the main results below. 

\begin{restatable}[Parallel Set Operations on B-trees]{thm}{btreemain}
\label{thm:btree-main}
    Given two B-trees with sizes $m$ and $n\ge m$, there exists a parallel algorithm that returns a new B-tree containing the union of the two input trees in and $O\left(m\log_B\left(\frac{n}{m}\right)\right)$ I/O work, $O(\log_B m \cdot \log_2 \log_B n +\log_B n)$ I/O span, where $B$ is the block size. 
\end{restatable}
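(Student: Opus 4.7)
The plan is to mirror the standard join-based Union pattern, but using the multi-way primitives that the paper builds up earlier. Given input trees $T_1$ (size $m$) and $T_2$ (size $n \ge m$), I would take the root of $T_2$, which holds up to $B-1$ splitter keys and up to $B$ subtree pointers, use an arbitrary-way Split on $T_1$ with those $B-1$ keys to obtain $B$ corresponding sub-pieces of $T_1$, recursively compute the Union of each sub-piece of $T_1$ with the matching subtree of $T_2$ in parallel, and finally glue the results with an arbitrary-way Join that also incorporates the root keys of $T_2$ (together with any keys from $T_1$ that collided with them). The correctness argument is routine once the splitter set is shown to partition the keyspace consistently, so I would spend little time on it.

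For the I/O work, I would set up the usual recurrence $W(m,n) = \sum_{i} W(m_i, n_i) + O\bigl(\text{Split} + \text{Join}\bigr)$ with $\sum m_i = m$ and $\sum n_i \le n$, where the per-node overhead is $O(\log_B n)$ for locating $B-1$ keys in $T_1$ and $O(B)$ for block manipulation at the root level. A standard convexity-based solution of this recurrence (as in the PAM/PaC-tree analyses, adapted to base $B$) gives $W(m,n) = O\bigl(m \log_B(n/m)\bigr)$; the key observation is that the cost is dominated by the splitting work charged to each of the $m$ keys of the smaller tree, each paying $O(\log_B(n/m))$ on its root-to-leaf path in the larger tree.

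The span is where the real work lies, and I expect this to be the main obstacle. A naive recurrence $S(m,n) = S(m/B, n) + O(\log_B n)$, using plain $B$-way Split and Join per level, only yields $S = O(\log_B m \cdot \log_B n)$, which is too weak. To obtain the claimed $O(\log_B m \cdot \log_2 \log_B n + \log_B n)$ bound, I would invoke the arbitrary-way Split and Join primitives promised in the paper and argue that each recursion level contributes only $O(\log_2 \log_B n)$ span rather than $O(\log_B n)$. The intuition is that once we are willing to split by many keys at once, a parallel batched descent down the larger tree can be organized as a doubling / parallel binary search over the $O(\log_B n)$-deep path structure, which collapses the per-level depth to $O(\log_2 \log_B n)$. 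The additive $O(\log_B n)$ term then captures the unavoidable root-to-leaf dependency chain that occurs once (e.g., at the top-level Split or the final Join), independent of the recursion depth. Summing over the $O(\log_B m)$ levels of the Union recurrence gives the stated bound.

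Finally, for the intersection and difference variants, the same recursive skeleton applies: at each node of $T_2$ we multi-split $T_1$, recurse on the pieces, and then either reassemble only the pieces corresponding to keys present in both trees (intersection) or remove matched roots (difference) before a multi-Join. Since the per-level primitives and the recursion shape are unchanged, the work and span bounds are identical, as the paper indicates in its appendix.
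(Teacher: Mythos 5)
There is a genuine gap, and it is in the decomposition itself. You recurse by pivoting on the root of the larger tree $T_2$, splitting the current piece of $T_1$ by the $B-1$ keys stored in that root and recursing into the $B$ children. This recursion has critical-path depth equal to the height of $T_2$, i.e.\ $\Theta(\log_B n)$ levels, and each level must at least perform a split of a piece of $T_1$ (an $\Omega(\log_B(\cdot))$-deep search) before its children can be spawned. Your own accounting $S(m,n)=S(m/B,n)+\dots$ with ``$O(\log_B m)$ levels'' is therefore incorrect: splitting $T_1$ by $B-1$ keys does not shrink every piece by a factor of $B$, and the chain of dependent calls follows $T_2$'s height, not $T_1$'s. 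Moreover, the fix you invoke --- that arbitrary-way Split/Join makes each level cost only $O(\log_2\log_B n)$ --- cannot apply here, because at each node you only ever have $B-1$ splitters available; the arbitrary-way primitives gain their $O(\log_B d\cdot\log_2\log_B n+\log_B n)$ span precisely by amortizing one $O(\log_B n)$-deep descent over $d\gg B$ splitters (via precomputed spine pointers and a prefix sum over the $O(\log_B n)$-length root paths). With $d=B-1$ per level there is nothing to amortize, and even granting $O(\log_2\log_B n)$ per level you would get $O(\log_B n\cdot\log_2\log_B n)$, which exceeds the claimed $O(\log_B m\cdot\log_2\log_B n+\log_B n)$ whenever $m\ll n$.

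The paper's proof restructures the recursion entirely: it picks $d-1$ splitters by \emph{global rank} (the $(n+m)i/d$-th keys for $d=\sqrt{n+m}$), multi-splits \emph{both} trees by all of them at once, recurses on the $d$ pairs in parallel, and multi-joins the results. The subproblem size square-roots at every level, so the recurrence $S(t)=S(\sqrt{t})+O(\log_B m\cdot\log_2\log_B t)$ has only doubly-logarithmically many levels and telescopes to the stated span. Equally essential is the base case you omit: when $m<\sqrt{n+m}$ the algorithm switches to a work-inefficient union costing $O(m\log_B n)$ work and $O(\log_B m\cdot\log_2\log_B n+\log_B n)$ span, and the work bound $O(m\log_B(n/m))$ comes from showing the resulting recurrence is leaf-dominated with the leaf costs summing to $O(m\log_B(n/m))$ (the Blelloch et al.\ argument adapted to base $B$). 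Without the rank-based $\sqrt{n+m}$-way divide-and-conquer and this base case, neither the span nor the work bound goes through.
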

Due to space limit, we present the \fname{Union} algorithm and analyze it in the main paper, 
and show how to achieve the same bounds for \intersection{} and \difference{} in \cref{sec:additional-set-operations}. 


\section{Preliminaries}
\label{sec:prelim}

\myparagraph{B-trees.}
A B-tree is a multi-way self-balancing tree data structure that maintains an ordered set of keys. 
With clear context, for a B-tree $T$, we also use $T$ to denote the set of keys in $T$. 
We use the standard definitions of parent, child, sibling, ancestor, and descendants in B-trees. 
More formally, a B-tree is either an empty node (external node), 
or a node with a set of $b$ keys $k_1,k_2,\dots k_b$ with $b+1$ subtrees $T_1,T_2,\dots T_{b + 1}$ with the following invariants:
\begin{itemize}
    \item $\lceil B/2\rceil \le b\le B$ for some parameter $B$, except for the root, where $2\le b\le B$.
    \item Each subtree $T_i$ is a B-tree.
    \item $\forall x\in T_{i-1}$ and $y\in T_i$, we have $x<k_i<y$. 
    \item All external nodes have the same depth, where the depth of a node is the number of ancestors (inclusive) of it. 
\end{itemize}

The height of a B-tree $T$, denoted as $h(T)$, is 0 if it is an external node, 
and $h(T')+1$ if $T$ has a child $T'$. If $T$ has multiple children, all of them should have the same height. 

When we discuss an algorithm on two B-trees, we always use $n$ to denote the one with the larger size and $m\le n$ the smaller size. 

The B-tree includes an operation for combining nodes when the number of keys in
the node is below $\lceil B/2\rceil,$
and for splitting a node into two when there are more than $B$ keys in the node.
We will refer to these operations as \defn{fuse} and \defn{divide} operations,
to avoid conflicting terminology.

\myparagraph{Sequence Notation.} We use $a_{i..j}$ ($i\le j$) to denote a sequence of elements $\langle a_i, a_{i+1}, \dots, a_j\rangle$. 

\myparagraph{Background of Join-based Algorithms.} For completeness, we introduce some 
background of \join-based algorithms on binary trees in \cref{app:join-background}. 
We present \cref{fig:framework} as an illustration for the high-level idea, as well as an example of implementing a \algname{Union} based on \join{}. We introduce more details about related work in \cref{app:related}.

One technical challenge to extend this idea to B-trees is to support multi-way \funcjoin{} and \funcsplit{} functions,
which may join multiple trees with keys in the middle, or split a tree using multiple splitters. 
Since the input tree heights may vary, these primitives must rebalance the returned tree properly. 
We introduce our algorithms for these primitives in \cref{sec:multi-split,sec:multi-join}, and finally present our \fastunion{} algorithm in \cref{algo:fastunion}. 

\begin{figure}
    \centering
    \includegraphics[width=\columnwidth]{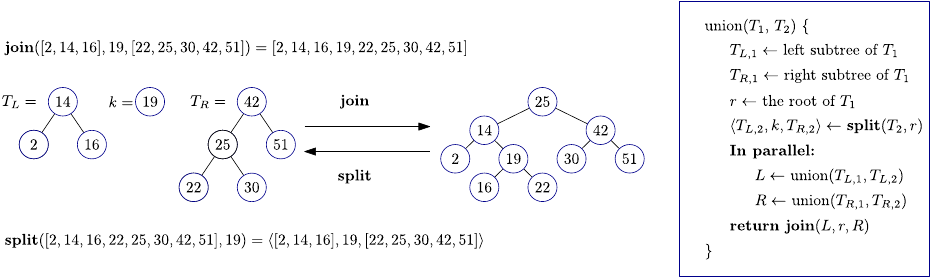}
    \caption{Illustration for \funcjoin{} and \funcsplit{} algorithms on binary trees, and an example of the set \algname{union} algorithm. To extend the idea to B-trees, multi-way \funcjoin{} and \funcsplit{} algorithms are needed.}
    \label{fig:framework}
\end{figure}



\section{The \titlecap{\ourmodel{}}}


In this paper, we propose the \ourmodel{}, which is an extension of the classic work-span model to analyze algorithms. The goal of this paper to use this model is to more precisely capture the I/O costs in algorithms based on fork-join parallelism, which is crucial for analyzing and implementing existing join-based algorithms on trees. 

Recall that in the {work-span model} in the classic multithreaded model
\cite{arora2001thread, blelloch2020optimal, blumofe1999scheduling}, we assume a set of threads share their memory.
Each thread acts like a sequential RAM plus a fork instruction that forks two threads running in parallel.
When both threads are finished with thir task, the original thread continues.
A parallel-for is simulated by forks in a logarithmic number of steps.
A computation can be viewed as a DAG. 
The \textit{work} $W$ of a parallel algorithm is the total number of operations,
and the \textit{span (depth)} $S$ is the longest path in the DAG.
In practice, we can execute the computation with work $W$ and span $S$ using a randomized work-stealing scheduler~\cite{blumofe1999scheduling,gu2022analysis} in time $W/P+O(S)$
with $P$ processors with high probability.

The extension we make to this model to adapt for practicality is that, instead of considering executing one word on RAM, we consider unit cost for reading or writing a block of size $B$, where $B$ is a given parameter.
Here $B$ can be the cacheline size when referring to in-memory algorithms, or the block size when considering external-memory algorithms.
In either case, this is the basic unit for block transfers, which is indeed the motivation of the design of B-trees that improves the performance as compared to a simpler binary search tree.
Here we assume each processor can hold $O(\log n)$ blocks (i.e., the stack memory) to execute the program.
Accessing the data in these blocks is free (it does not asymptotically change the complexity, but only makes the analysis easier).
We define this modified work as the \defn{I/O work}.
Regarding the span, we also adapt similarly (i.e., count the access of an entire block with unit cost), and refer to this as the \defn{I/O span}.
Here we assume a fork or join can create or synchronize $B$ threads for simplicity.
All I/O span bounds remain the same when translating the bound to PRAM, or are affected by a factor of $\log B$ (mostly a small constant) to the binary fork-join model~\cite{blelloch2020optimal}. 

An important subroutine we will be using in this model is the \threadgather{}
primitive.
This operation allows us to gather the $O(B)$ elements which are stored in
various blocks in external memory into a single block in $O(1)$ I/O span and
$O(B)$ I/O work.
During the synchronization of each thread,
the thread may return a $O(1)$ amount of data,
thus the total amount of data returned by all threads is $O(B).$


\section{Parallel Joinable B-Tree}

In this section, we present primitive algorithms for the joinable B-trees. 
We first introduce the \bjoin{} operation,
which takes a sequence of $(b+1)$ B-trees and $b$ keys in between, and
combines them into a valid, balanced B-tree.  
We then show the \bjoin{} algorithm with detailed cost analysis in
\cref{sec:b-way-join}. 
Then, the \bjoin{} algorithm is used as a subroutine to develop the more
general \msplit{} and \mjoin{} algorithms in
\cref{sec:multi-split,sec:multi-join}
in order to achieve better I/O bounds for the \fastunion{} algorithm,
described in \cref{sec:fast-union}.

These join and split algorithms are similar to the approach found in the other
parallel search trees \cite{blelloch2016just, Blelloch1998},
but adapted for the \ourmodel{}.

These algorithms will be used as the building blocks for the set operations
in \cref{sec:fast-union}.
The key idea is to use the \msplit{} operation to split the two B-trees
into $\sqrt{n + m}$ subtrees each,
where the union of a pair of the subtrees is of size $\sqrt{n + m}.$
We then make recursive calls to the union of these pairs,
leading to $\sqrt{n + m}$ total subtrees,
which we then use the \mjoin{} algorithm to combine into a single B-tree.

The \msplit{} operation uses \bjoin{} as a primitive,
by first searching for one of the keys they will be splitting on,
which creates a series of subtrees which must then be joined together,
rebalancing the resulting tree.
The \mjoin{} operation on the other hand is a more complex operation and
differs significantly from the \bjoin{} operation.

\subsection{B-way Join}
\label{sec:b-way-join}

\subsubsection{Algorithm Description}
\begin{algorithm}[t]
\small
\DontPrintSemicolon
\caption{$\bjoin{}(T_{1..(b+1)},k_{1..b})$\label{algo:bjoin}}
\KwIn{A sequence of $b+1$ B-trees $T_{1..(b+1)}$ and $b$ keys $k_{1..b}$. 
For any $x\in T_{i}$ and $y\in T_{i+1}$, we have $x<k_i<y$.}
\KwOut{A B-tree $T$ with all keys in $\{k_1, k_2, \dots, k_b\} \cup \bigcup T_i$}
$\hmax\gets \max_i h(T_i)$\tcp*[f]{$\hmax$ is the largest tree height}\\
$A\gets$ the subsequence of $\langle 1,\dots, b\rangle$, where $i\in A$ iff. $h(T_i)=\hmax$\tcp*[f]{indexes of tall trees}\\
\lIf{the first element in $A$ is not 1}{add $0$ at the beginning of $A$}
\parForEach{$i\in A$}{
$i'\gets$ the successor of $i$ in $A$\tcp*[f]{$T_{i'}$ is the next tree with height $h^*$}\\
\If(\tcp*[f]{Handling the special case when there exist keys to the left of the first tall tree}){$i=0$}{
$R\gets$ the leftmost subtree of $T_{i'}$\\
$S\gets\langle T_{1},\dots T_{i'-1},R\rangle$\tcp*[f]{The sequence of trees for the leftmost recursive call}\\
$k'\gets \langle k_1,k_1,\dots, k_{i'-1}\rangle$ \tcp*[f]{The sequence of keys for the leftmost recursive call}\\
$X\gets\bjoin(S,k')$\tcp*[f]{All keys to the left of the first tall tree}\\
Replace the leftmost subtree of $T_{i'}$ with $X$\tcp*[f]{$T_{i'}$ may be unbalanced for now}\\
}
\Else{
$L\gets$ the rightmost subtree of $T_i$\\
$S\gets \langle L, T_{i+1}, T_{i+2},\dots, T_{i'-1} \rangle$ \tcp*[f]{The sequence of trees for the recursive call}\\
$k'\gets \langle k_{i+1}, k_{i+2},\dots, k_{i'-1} \rangle$\tcp*[f]{The sequence of keys for the recursive call}\\
$X\gets\bjoin(S,k')$\tcp*[f]{$X$ will contain all keys between the $i$-th and the $i'$-th tree}\\
Replace the rightmost subtree of $T_{i}$ with $X$\tcp*[f]{$T_{i}$ may be unbalanced for now}
}
}
\lIf{the first element in $A$ is 0}{remove $0$ from $A$}
$T\gets$ Concatenate all (new) trees $T_i$, with keys $k_{i-1}$, for all $i\in A$\\
\fname{SimpleRebalance}$(T)$ \tcp*[f]{Introduced in \cref{lemma:multiway-join-rebalance-bounds}}
\end{algorithm}

The join operation merges $b + 1$ B-trees $T_1, T_2, \dots, T_{b+1},$ and
$b$ separator keys $k_1, k_2, \dots, k_{b}$,
into a single valid B-tree,
where $b \leq B.$
Each input tree $T_i$ must be a valid B-tree and $T_i$ contains keys only in the
range $(k_{i-1}, k_{i}),$ where $k_0 = -\infty$ and $k_{b + 1} = \infty$.  

The primary challenge lies in ensuring the join operation is both
computationally efficient and maintains the balance of the tree,
even when the height of the input trees may differ significantly.
We describe our algorithm below in three steps. 
An illustration of the three steps on an example input is in \cref{sec:b-way-join}.
In \cref{sec:b-way-join-analysis}, we prove the correctness and efficiency of this algorithm. 

\myparagraph{Step 1: Grouping.} 
The algorithm starts by identifying the height of the tallest tree in the input set,
which we denote as $\hmax.$
We say a tree $T_i$ is a \defn{tall tree} if it has height $\hmax$, and a \defn{short tree} otherwise.
Note that there may be multiple tall trees in the sequence $T$. 
Let $A \subseteq \langle 0, 1, \dots, b\rangle$ be the subsequence of indices of trees with
height $\hmax.$
We then partition the input trees into $|A| + 1$ groups. 
At a high level, each group contains all keys and trees between two tall trees, 
and the leftmost subtree of the previous tall tree. 
If $T_1$ is not a tall tree, a special group, which we call a \defn{leading group}, to the left of $T_1$ will also be formed. 

More formally, assume $i\in A$, and $i'=\successor(i)$ is the successor of $i$ in $A$.\footnote{
  We define $\successor{}(\max(A))$, where $\max(A)$ the last element in $A$, to be $b + 1$.
}
Then the corresponding group includes $i'-i$ trees, which are the rightmost subtree in $T_i$,
and all trees from $T_{i+1}$ to $T_{i'-1}$, and $i'-i-1$ keys, which are from $k_i$ to $k_{i'-2}$. 
Note that the trees are separated by the corresponding keys in the same group. 
If $T_1$ is not a tall tree, all elements to the left of the first tall tree $T_{i'}$ form the leading group, which includes all trees up to $T_{i'-1}$, the leftmost subtree of $T_{i'}$,
and all keys up to $k_{i'-1}$. 

\myparagraph{Step 2: Recursing and Subtree Replacing.} 
After the first step, each group contains a sequence of trees, and the corresponding keys that separate them. The second step simply calls \bjoin{} recursively to reorganize each group into a valid B-tree. After that, for the group corresponding to trees between $T_i$ and $T_{\successor(i)}$,
we use the resulting tree to replace the rightmost subtree in $T_{i}$. 
If there exists a leading group, it replaces the leftmost subtree in the first tall tree.

In \cref{theorem:multiway-join-height} we will show that \bjoin{} on a sequence of trees with maximum height $h$ will lead to a tree with height at most $h+1$. 
In this case, all trees involved in the recursive call is either a short tree, which has height at most $h^*-1$, or a subtree of a tall tree, which has height $h^*-1$. Therefore, the resulting trees from recursive calls have height at most $h^*$. After replacing a subtree in a tall tree, it can be at most one level taller than the other subtrees.

\myparagraph{Step 3: Concatenating and Rebalancing.} 
After the previous step, all short trees have been combined into a tall tree. 
Note that for each tall tree, except for the first one, the key immediately before it is not included in any groups. Therefore, our final step is to concatenate all these remaining components together, which includes all tall trees $T_i$ for all $i\in A$, and $k_{i-1}$ for all $i\in A$. 
This is performed by creating a B-tree node with keys at the root at $T_{A_1}$, then $k_{A_2-1}$, 
then all keys at the root of $T_{A_2}$, then $k_{A_3-1}$, so on so forth. 
The subtrees, from left to right, are all subtrees in $T_{A_1}$, then all subtrees in $T_{A_2}$, so on so forth. Let the resulting tree be $T$. 

The tree $T$ is a valid B-tree except for two aspects: 1) some of its subtrees may be taller than other subtrees, but can be taller by at most 1, and 2) the number of keys at the root may be more than $B$. 
For such minor imbalance, we will use the \fname{SimpleRebalance} algorithm, introduced in \cref{lemma:multiway-join-rebalance-bounds}, to rebalance the tree. 
At a high level, the rebalancing algorithm simply promotes all keys at the subtree root of the taller subtrees to the root of the entire tree, and if an overflow occurs, we can promote every $B$ keys at the root to a higher level, increasing the tree height by 1. 

\definecolor{babyblueeyes}{rgb}{0.63, 0.79, 0.95}
\definecolor{pastelred}{rgb}{1.0, 0.41, 0.38}

\begin{figure}[t]
  \centering
  %
  %
  %
  %
  %
  %
  %
  %
  %
  %
  %
  \includegraphics[width=0.8 \textwidth]{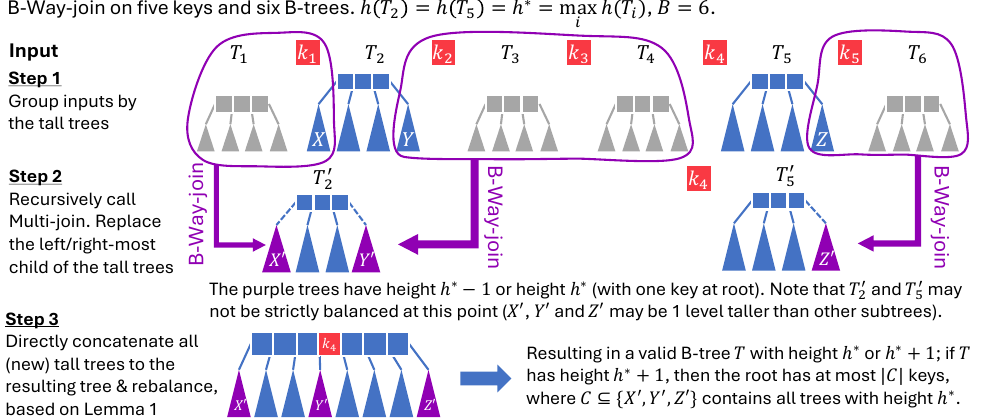}
  \caption{ \small
    An illustration of the \bjoin{} algorithm (described in \cref{sec:b-way-join}) with 6 trees and 5 keys. The blue trees are
    the \textit{tall trees} with height $h^*$. All other trees has height less than $h^*$. 
    \hide{Our algorithm will partition them into 3 groups based on the tall trees. 
    Each group is passed into a recursive call of the MultiwayJoin algorithm,
    and the results will replace the leftmost/rightmost subtrees of the tall trees. 
    Finally, all resulting trees from recursive calls will be concatenated together. 
    When overflow at the root occurs, we rebalance the tree and achieve an output tree with height at most $h^*+1$.}
    \hide{$T_1$ and $C_1$ form the first group, with $k_1$ as the key.
    $C_2, T_3,$ and $T_4$ form the second group, with $k_2, k_3, k_4$ as the keys.
    And $C_3$ and $T_6$ form the third group, with $k_5$ as the key.
    Each group is passed into a recursive call of the MultiwayJoin algorithm.
    The output of the first two recursive calls are made to be the left most
    and right most children of the first tallest tree.
    The remaining outputs are the right most child of their respective tallest
    trees. 
    \yihan{What happens to $k_4$? Will it become an extra key to the second recursive call, since the leftmost subtree in $T_5$ will not be considered in the recursive call? Should we leave it for later joining $T_2$ and $T_5$?}
    }  }
  \label{fig:multiway-join-partitioning}
\end{figure}

See Figure \ref{fig:multiway-join-partitioning} for an illustration of the
algorithm.

\begin{restatable}[B-Way Join Analysis]{thm}{multijoincost}
  Let $T_1, T_2, \dots, T_{b + 1}$ be a set of B-trees,
  with the largest tree height $h_{\max}$ and
  the shortest tree height $h_{\min}$,
  and $k_1, k_2, \dots, k_{b}$ be a set of separator keys,
  where $b \leq B$.
  The \bjoin{} operation can be performed in $O(h_{\max} - h_{\min})$ I/O span and
  $O(b \cdot (h_{\max} - h_{\min}))$ I/O work.
\end{restatable}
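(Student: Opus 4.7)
The plan is to analyze both I/O span and I/O work by induction on the recursion depth of \bjoin{}, using two facts: the recursion has depth at most $h_{\max}-h_{\min}$, and the non-recursive cost at each level is $O(b)$ I/O work and $O(1)$ I/O span, with subproblems at each level running in parallel.

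For the depth bound, I first observe that every tree passed into a recursive call is either a short tree (height $<\hmax$) or a leftmost/rightmost subtree of a tall tree (height exactly $\hmax-1$). Hence the maximum height inside any subproblem drops by at least one per level, while the minimum height never drops below the original $h_{\min}$. After at most $\hmax-h_{\min}$ levels every group is uniform in height, and the recursion bottoms out (each group contains only a single tree with no keys, so the recursive call is a no-op). For the per-level cost, Step~1 reads the root block of each of the $b+1 \leq B+1$ input trees and computes $A$ by a \threadgather{}, giving $O(b)$ work and $O(1)$ span. Step~2's non-recursive portion performs a constant number of pointer updates per tall tree to substitute the rebalanced subtree into place, also $O(b)$ work and $O(1)$ span. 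Step~3 builds one concatenated node whose keys are the roots of the tall trees together with the separators $k_{i-1}$ for $i \in A$, and finishes with \fname{SimpleRebalance}, which by \cref{lemma:multiway-join-rebalance-bounds} contributes $O(b)$ work and $O(1)$ span. Since the groups partition the input trees and keys and the subcalls run in parallel, the total per-level work is $O(b)$ and the per-level span is $O(1)$.

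Combining these ingredients gives the recurrences $S(\hmax,h_{\min}) \leq S(\hmax-1,h_{\min})+O(1)$ and $W(\hmax,h_{\min}) \leq W(\hmax-1,h_{\min})+O(b)$, which solve to $O(\hmax-h_{\min})$ and $O(b(\hmax-h_{\min}))$ respectively, as claimed. The main obstacle will be justifying the $O(1)$ I/O span of the bookkeeping at each level despite the $b$ tall-tree roots and separator keys potentially residing in distinct memory blocks. This relies crucially on the \threadgather{} primitive of the \ourmodel{} together with the hypothesis $b \leq B$, which ensures that all height data, key separators, and subtree pointers involved in one level of \bjoin{} fit into a single block after gathering; without this, concatenating the roots in Step~3 and forming the index set $A$ in Step~1 would seemingly incur $\Omega(\log b)$ span.
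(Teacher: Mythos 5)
Your proof is correct and follows essentially the same accounting as the paper's: both charge the cost as a recursion/traversal depth of $O(h_{\max}-h_{\min})$ times an $O(1)$-span, $O(b)$-work per-level cost, using the model's $B$-way fork and \threadgather{} to keep the per-level bookkeeping at constant span. The one dependency worth making explicit is that your appeal to \cref{lemma:multiway-join-rebalance-bounds} in Step~3 silently assumes each recursive call returns a tree of height at most $\hmax$ (so the substituted subtree is taller than its siblings by at most one); the paper invokes \cref{theorem:multiway-join-height} for exactly this point.
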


We analyze the cost of the \bjoin{} operation in \cref{sec:b-way-join-analysis}.

\subsection{Multi-Split}
\label{sec:multi-split}

The $\msplit{}$ algorithm takes a B-tree $T$ and a sorted list of $d$ keys
$k_{1..d},$
and returns $d + 1$ subtrees,
such that the $i$-th subtree contains all keys in the range $(k_{i - 1}, k_i),$
where $k_0 = -\infty$ and $k_{d + 1} = \infty.$ 
This is achieved by spawning $d + 1$ threads,
where each thread $i$ invokes a helper function $\tsplit{},$
which outputs a single subtree containing all keys in the range
$(k_{i - 1}, k_i).$
See Algorithm \ref{algo:tsplit}.

\begin{algorithm}[t]
\small
\caption{$\tsplit{}(T, k_1, k_2)$\label{algo:tsplit}}
\DontPrintSemicolon
\KwIn{A B-trees $T$ and two keys $k_1, k_2$ such that $k_1 < k_2$}
\KwOut{A B-tree $T$ containing all keys in the range $(k_1, k_2)$}
\lIf{$T$ is empty or contains only keys between $k_1$ and $k_2$}{
  \Return{$T$}
}
\lIf{$T$ is a leaf}{
  \Return{the B-tree containing all keys in the range $(k_1, k_2)$}
}
$r_{1..b} \gets$ the keys at the root of $T$\\
$i \gets$ the smallest index such that $k_1 < r_i,$ 0 if $k_1 < r_1$\\
$j \gets$ the largest index such that $r_j < k_2,$ $b + 1$ if $r_b < k_2$ \\
$C_{i..j} \gets$ all children of $T$ from $i$ to $j$, inclusive\\
\lIf{$i = j$}{
  \Return{$\tsplit{}(C_i, k_1, k_2)$}
}
$C^\prime_i \gets \tsplit{}(C_i, k_1, \infty)$\\
$C^\prime_j \gets \tsplit{}(C_j, -\infty, k_2)$\\
$r^\prime_{i..j} \gets$ copy of keys $r_{i..j}$\tcp*[f]{Prevents adjacent threads from dividing the same node}\\
\Return{$\bjoin{}(\langle C^\prime_i, C_{i + 1..j - 1}, C^\prime_j \rangle, r^\prime_{i..j})$}
\end{algorithm}

The high-level idea of the algorithm is to search for the keys $k_i$ and
$k_{i + 1}$ in $T,$
copying the nodes and keys along the search path that fit within the range
$(k_i, k_{i + 1}).$
Any node that lies on the shared prefix of both search paths,
i.e. before the paths to $k_{i}$ and $k_{i + 1}$ diverge,
are not part of the output tree and can be skipped.
Once the paths diverge,
recursively split down the search path of $k_1$ and $k_2.$
If our entire node falls within the range $(k_1, k_2),$
then a copy of the pointer to the node is made.
Otherwise,
only the keys which fall within the range $(k_1, k_2)$
and the pointers to the children that fall within the range are copied.

We copy the keys of the node instead of dividing it,
which is the natural operation to perform,
because dividing requires synchronization between adjacent threads.
See Figure \ref{fig:tsplit}.

On the way back up the tree,
the algorithm performs a $\bjoin{}$ operation to combine the subtrees and keys
which contain the keys in the range $(k_1, k_2).$
This serves the purpose of both combining and rebalancing the subtrees.
We prove the following theorem in \cref{sec:multiway-split-cost}.

\begin{figure}[t]
  \centering
  \includegraphics[width=0.5 \textwidth]{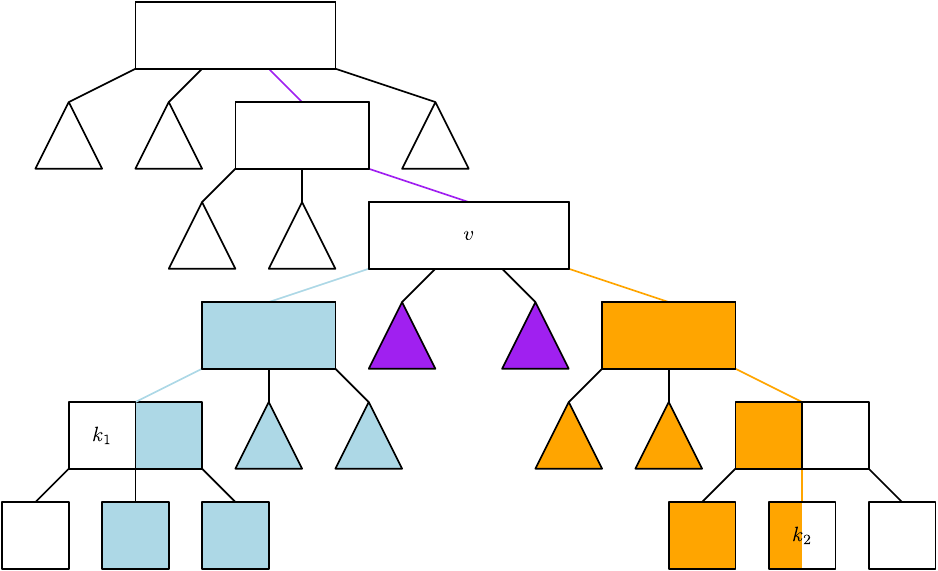}
  \caption{
    Above is a B-Tree, where each node rectangle is a node in the B-Tree,
    whereas the triangles are subtrees shortened for space.
    There are two search paths to $k_1$ and $k_2$ from the root,
    sharing the purple path at the start,
    then separating into the blue and orange paths respectively at node $v.$
    All nodes colored blue are the result of following the blue path and
    copying the appropriate keys,
    and similarly for the orange nodes.
    The node containing $k_2$ excludes all keys greater than or equal to $k_2,$
    thus being half orange.
    The result of $\tsplit{}$ will be the union of the blue, orange, and purple
    nodes.
    All white sections of the tree will be the search paths of other
    \tsplit{} operations.
  }
  \label{fig:tsplit}
\end{figure}

\begin{restatable}[Multiway Split Cost]{theorem}{multiSplitCost}
  \label{thm:multi-split-cost}
  We can split a B-tree $T$ by $k_1, \dots, k_d$ keys in parallel with
  $O(\log_B d + \log_B n)$ I/O span and $O(d \log_B n)$ I/O work,
  where $n$ is $\sum_{i = 1}^d |T_i|,$
  in the \ourmodel{}.
\end{restatable}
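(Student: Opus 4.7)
The plan is to establish the bounds by analyzing a single \tsplit{} call and then aggregating over the $d+1$ parallel instances spawned by \msplit{}. I would separate the work analysis from the span analysis, since they rely on different structural facts about how \tsplit{} traverses the tree.

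First, for I/O work, I would observe that each \tsplit{}$(T, k_1, k_2)$ traces down at most two root-to-leaf search paths in $T$, each of length $O(\log_B n)$. At each visited node the algorithm reads one block (the node itself), computes the split indices $i$ and $j$ locally (fitting in $O(1)$ blocks by the \threadgather{} primitive from the \ourmodel{}), writes at most one new copy of the boundary node, and finally invokes \bjoin{} on the sequence $\langle C'_i, C_{i+1..j-1}, C'_j\rangle$. Since the untouched children $C_{i+1..j-1}$ all share height $h-1$, and the recursive results $C'_i, C'_j$ each have height within a constant of $h-1$ (by the height guarantee established when analyzing \bjoin{} in \cref{sec:b-way-join}), the \bjoin{} at each level has $h_{\max}-h_{\min}=O(1)$ and thus contributes $O(B)$ operations, i.e.\ $O(1)$ I/O work, by the cost theorem for \bjoin{}. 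Summing over the $O(\log_B n)$ levels gives $O(\log_B n)$ I/O work per thread, and summing over the $d+1$ threads yields the claimed $O(d \log_B n)$ I/O work.

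Second, for I/O span, I would show by induction on tree height that a single \tsplit{} has span $O(\log_B n)$. The two recursive calls on $C_i$ and $C_j$ execute in parallel, each recursing on a subtree of height $h-1$, so together they contribute span $O(\log_B n)$ along the longer branch. The terminating \bjoin{} at each level has $O(1)$ I/O span thanks to the same bounded height-difference invariant, so the sequential composition of one \bjoin{} per recursion level adds only an $O(\log_B n)$ additive term. Finally, \msplit{} spawns its $d+1$ child threads through the fork tree, and since a single fork in the \ourmodel{} may create $B$ threads, this forking contributes $O(\log_B d)$ span. Adding these pieces gives the $O(\log_B d + \log_B n)$ I/O span bound.

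The main obstacle will be rigorously justifying that the \bjoin{} operations along a single \tsplit{} descent compose to $O(\log_B n)$ rather than $O(\log_B^2 n)$ span. This requires a tight height-preservation invariant for \tsplit{}: namely that \tsplit{}$(T,k_1,k_2)$ returns a tree whose height differs from the height of $T$'s relevant portion by $O(1)$, so that each successive \bjoin{} sees a bounded $h_{\max}-h_{\min}$. I would prove this invariant by strong induction on the height of $T$, using the height bound already established for \bjoin{} in \cref{sec:b-way-join} together with the observation that at each recursion step only one boundary child is modified while the interior children remain at uniform height. Once this invariant is in hand, the two aggregate bounds follow by summing the per-level contributions as above.
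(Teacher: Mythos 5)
Your overall skeleton (per-thread search of length $O(\log_B n)$, one \bjoin{} per level on the way back up, plus $O(\log_B d)$ for forking) matches the paper's, but the proof hinges on an invariant that is false, and this is exactly the point you flag as "the main obstacle." You claim that $\tsplit{}(T,k_1,k_2)$ returns a tree whose height is within $O(1)$ of the relevant portion of $T$, so that every \bjoin{} along the descent sees $h_{\max}-h_{\min}=O(1)$. This fails: whenever the algorithm hits the case $i=j$ (the whole range $(k_1,k_2)$ falls inside a single child), it recurses into that child \emph{without} performing a \bjoin{}, and in the one-sided calls $\tsplit{}(C_i,k_1,\infty)$ this single-child descent can repeat for many consecutive levels. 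The tree returned from below such a stretch can be arbitrarily shorter than the full-height siblings it is eventually joined with, so an individual \bjoin{} can have height difference $\Theta(\log_B n)$, and your per-level "$O(1)$ span each, $O(\log_B n)$ levels" composition breaks down. The paper closes this gap with an amortized charging argument: each skipped (single-child) level is charged $O(1)$, and the height difference of the next \bjoin{} above a skipped stretch is bounded by the length of that stretch, so the per-\bjoin{} costs telescope to $O(\log_B n)$ in total. Your induction on the height of $T$ cannot recover this, because the statement you would be inducting on is not true of the recursive outputs.

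A second, smaller imprecision is in the work bound. Invoking the \bjoin{} cost theorem with $h_{\max}-h_{\min}=O(1)$ gives $O(b\cdot(h_{\max}-h_{\min}))=O(B)$ I/O work per level, not $O(1)$; writing "$O(B)$ operations, i.e.\ $O(1)$ I/O work" conflates word operations with block transfers, and at $O(B)$ per level the total would be $O(dB\log_B n)$, a factor $B$ too large. The paper's proof instead uses the sharper structural fact (which you state but do not exploit) that in each such \bjoin{} at most the two boundary trees $C'_i,C'_j$ differ in height, so only $O(1)$ blocks are touched per level rather than $O(B)$. With that observation, plus the same telescoping over skipped levels, the $O(d\log_B n)$ work bound follows.
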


\subsection{Multi-Join}
\label{sec:multi-join}

Suppose we have $d + 1$ B-trees $T_{1, \dots, d + 1},$
and a sorted list of $d$ keys $k_{1, \dots, d},$
such that for any $x \in T_i$ and $y \in T_{i + 1},$ we have $x < k_i < y.$
The $\mjoin{}$ algorithm takes as input $T_{1, \dots, d + 1}$ and
$k_{1, \dots, d},$
and returns a single tree $T$,
such that $T$ contains $\bigcup_{i = 1}^d T_i$ and $\{k_1, k_2, \dots, k_d\}.$

We first describe here an overview of the $\mjoin{}$ algorithm.
\cref{sec:b-way-join} shows how to join $B$ trees in
$O(\log_B n)$ I/O span and $O(B \log_B n)$ I/O work.
Therefore,
a simple solution is to join $B$ trees together in parallel across
$\log_B d$ rounds of joining.
However, this results in an algorithm with $O(\log_B d \log_B n)$ I/O span,
which is not optimal for larger values of $d.$

We improve the upper bound on the I/O span by reducing the cost of joining $B$
trees down to an $O(\log_B \log_B n)$ I/O span and $O(\log_B n)$
I/O work,
which we call the $\bjoinfast{}$ operation.
Let us define nodes containing more than $B$ keys as \defn{overloaded},
and nodes with exactly $B$ keys as \defn{full}.
In order to achieve these bounds,
we use a similar strategy to Akhremtsev and Sanders
\cite{DBLP:journals/corr/AkhremtsevS15},
where the trees are preprocessed such that each tree $T_i$ contains a list of
pointers to all nodes along their left and right spines.
This allows us to efficiently find the level at which each tree must traverse in
order to perform the join operation.
For each $\bjoinfast{}$ operation,
the output tree must also maintain this list of pointers.
Then we show how to divide all nodes that will be overloaded during
$\bjoinfast{}$ in parallel.

\subsubsection{Faster B-way Join}
\label{sec:fast-b-way-join}
\begin{algorithm}[t]
\small
\caption{$\bjoinfast{}(T_{1..b+1},k_{1..b})$\label{algo:bjoinfast}}
\KwIn{
  A sequence of $b+1$ B-trees $T_{1..b+1}$ and $b$ keys $k_{1..b}$. 
  For any $x\in T_{i}$ and $y\in T_{i+1}$, we have $x<k_i<y$. 
  For all $T_i,$ there is at most one node with $B$ or more keys along their left and right spines.
}
\KwOut{A B-tree $T$ with all keys in $\{k_1, k_2, \dots, k_b\} \cup \bigcup T_i$}
$\hmax\gets \max_i h(T_i)$\tcp*[f]{$\hmax$ is the largest tree height}\\
$A\gets$ the subsequence of $\langle 1,\dots, b\rangle$, where $i\in A$ iff. $h(T_i)=\hmax$\tcp*[f]{indexes of tall trees}\\
\If{the first key in $A$ is not 1}{
  $i^\prime \gets$ the first key in $A$\\
  $T_{i^\prime} \gets \joinwithtall{}(T_{1..i^\prime}, k_{1..i^\prime - 1})$\\
}
\parForEach{$i\in A$}{
  $i'\gets$ the successor of $i$ in $A$\tcp*[f]{$T_{i'}$ is the next tree with height $h^*$}\\
  $T_{i^\prime} \gets \joinwithtall{}(T_{i..i^\prime}, k_{i..i^\prime - 1}) $
}
\lIf{the first key in $A$ is 0}{remove $0$ from $A$}
$T\gets$ Concatenate all (new) trees $T_i$, with keys $k_{i-1}$, for all $i\in A$\\
\fname{SimpleRebalance}$(T)$ \tcp*[f]{See \cref{lemma:multiway-join-rebalance-bounds}}
\end{algorithm}

Let $L_i$ and $R_i$ be the list of pointers to the left and right spines of
some tree $T_i$ respectively.
$L_i$ and $R_i$ are sorted in reverse order by the height of the node,
i.e. the leaf node is at index 0 and the root is at index $h(T_i).$
Implementation details are discussed in \cref{sec:maintain-pointers}.

\myparagraph{Step 1: Grouping}
As in the \bjoin{} algorithm in \cref{sec:b-way-join},
we first identify the tallest trees of $T_{1, \dots, b},$
and group the trees and keys together.

\myparagraph{Step 2: Fusing Within Groups}
Then, we use the list of pointers to the left and right spines in order to jump
directly to the correct height of the tree for each group.

Each tree $T_j$ is assigned some $T_i$ to be fused with,
where $i$ is the largest index within the same group such that $i < j$ and
$h(T_i) \geq h(T_j).$
Then a thread is spawned for each $T_j$,
and its corresponding key are fused with $L_i[h(T_j)]$ or $R_i[h(T_j)].$
As in Section \ref{sec:b-way-join},
we also account for the special case where $T_1$ is not the tallest tree.

\begin{figure}[t]
  \centering
  \includegraphics[width=0.8 \textwidth]{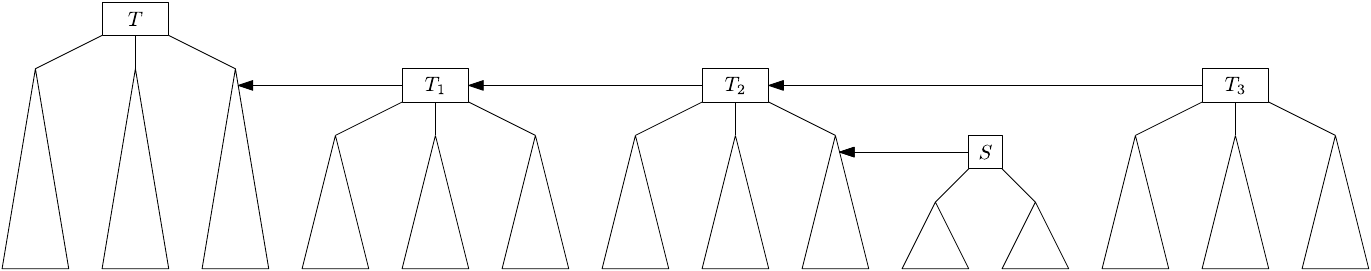}
  \caption{
    Trees $T_1, T_2,$ and $T_3$ are all trees of the same height in the same
    group.
    $T_1$ fuses with a node in $T,$ but at the same time,
    $T_2$ is also fusing with the root node of $T_1,$
    and $T_3$ is fusing with the root node of $T_2.$
    Therefore,
    there is a dependency where $X$ must fuse with $T_2$ first,
    then $T_2$ must fuse with $T_1,$
    then $T_1$ with $T.$
    While $S$ is also fusing with a node in $T_2,$
    $S$ is fusing with node other than the root of $T_2,$
    thus there is no conflict.
  }
  \label{fig:bfast-conflict}
\end{figure}

Conflicts occur when multiple trees of the same height are in the same group, 
since threads may try to fuse with roots of trees that are also
fusing with other trees.
However,
in constant I/O span and work,
each thread is able to identify which portion of the tree they will end up
fusing with and directly write to that node.
Consider the tree $T_i,$
and let $\treejoinset{}$ be the set of all trees with the same height as $T_i.$
$\treejoinset{}$ can be constructed using \threadgather{} to get the heights of
all $T_i$ in a single cache line and constructing each set in internal memory.
Then all $B$ threads are spawned,
and thread $i$ can check if $|\treejoinset{}| > 1.$
If so,
thread $i$ can also identify the tree in $\treejoinset{}$ with the smallest
index and the tree $T^\prime$ they would be fusing with.
If $T^\prime$ exists,
then thread $i$ can fuse $T_i$ with $T^\prime$ instead,
while ensuring there are no write conflicts with other threads which are
fusing with the same node.
For example,
in Figure \ref{fig:bfast-conflict},
$T_1$ is writing keys to some node in $T,$
which we call $v.$
$T_2$ will also write to $v,$
but starting from the index $|v| + |T_1|$ away from the start of $|v|.$

\myparagraph{Step 3: Dividing Large Nodes}
After each thread has finished fusing their trees,
they check if the node is overloaded.
If so,
then the thread will divide the node into $\lceil \frac{b}{B} \rceil$ nodes,
where $b$ is the number of keys in the node.
However,
before pushing $\lceil \frac{b}{B} \rceil - 1$ keys up to the parent node,
all threads are first synchronized in order to avoid write conflicts.
Then all threads are spawned again to push up their keys to the parent node.
We show in \cref{lem:multiway-join-divide-up-bound},
in \cref{sec:multi-join-cost-analysis},
all overloaded nodes will push at most one key to their parent node.

\myparagraph{Step 4: Dividing Smaller Nodes in Parallel}
Any remaining overloaded nodes will only cause consecutive full nodes above it
to be divided,
since a single key is pushed up per overloaded node.
By computing the number of consecutive full nodes above each overloaded node,
we can divide these nodes in parallel.

Using the prefix sum algorithm,
described in \cref{sec:prefix-sum} of the Appendix,
we can accomplish this.
Create an array $C_v$ of length $\log_B n$ for each overloaded node $v,$
where $C_v[i]$ is set to 1 if the node $i$ levels above is full or overloaded,
and 0 otherwise.
If there is no node $i$ levels above $v$,
then $C_v[i]$ is set to 0.
Then take the prefix sum of each $C_v,$
and mark the node $i$ levels above $v$ to be divided if $C_v[i] = i.$
The number of threads which marks a node to be divided is the number of
keys pushed up into it.
Finally,
a thread is spawned for each node which is marked to be divided,
and is divided in parallel.

\begin{figure}[t]
  \centering
  \includegraphics[width=0.8 \textwidth]{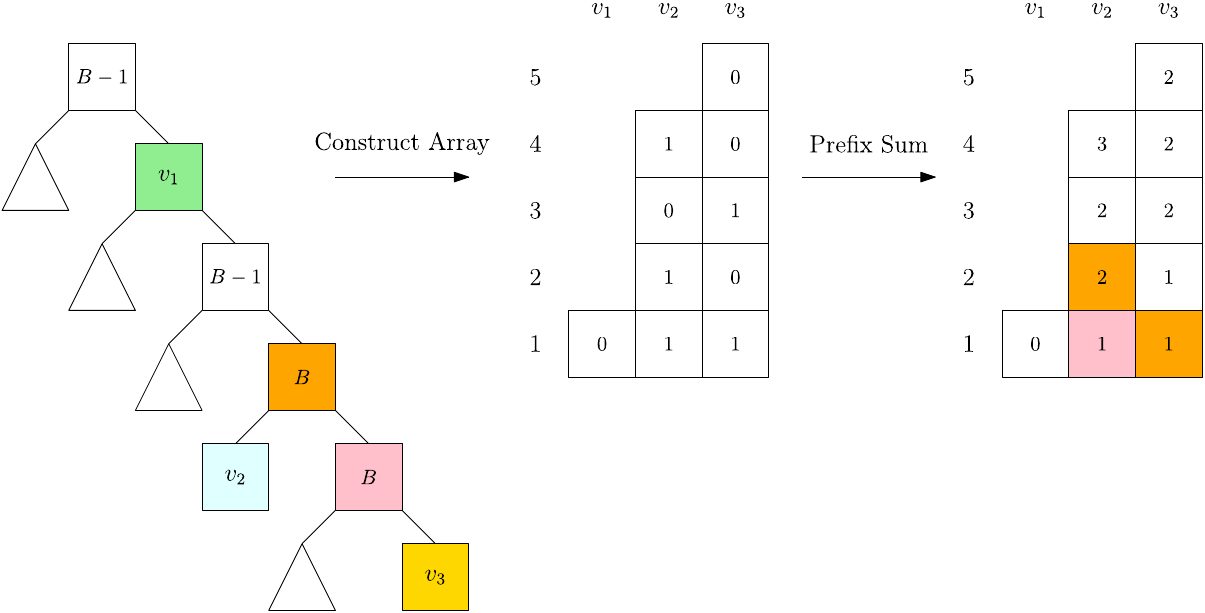}
  \caption{
    The B-tree contains 3 nodes $v_1, v_2,$ and $v_3$ which are overloaded.
    We use the list of pointers in order to construct our arrays $C$.
    A cell within the array is set to 1 if the node is full or overloaded,
    i.e. it contains $B$ or more keys.
    Otherwise, it is set to 0.
    Note that since $v_1$ only contains a single node above it,
    its array is also of length 1.
    Then we take the prefix sum of each array,
    and mark a node to be divided if $C[i] = i.$
    The three colored cells are the nodes which are to be divided.
    Two of the cells are orange,
    referring to the same node,
    meaning the node takes a key from two of its children.
  }
  \label{fig:bfast-divide-nodes}
\end{figure}

\myparagraph{Step 5: Concatenating and Rebalancing}
Just as in the \bjoin{} algorithm,
after the previous steps,
all short trees have been combined into a single tall tree,
which is then rebalanced.

\subsubsection{Maintaining the Left and Right Pointers}
\label{sec:maintain-pointers}
In order to achieve the I/O span and work bounds we desire,
maintaining the list of pointers along the spines of the trees is crucial.
We achieve this by implementing the lists as sequential B-trees,
where the keys are pointers to the nodes along the spine. 
By using order statistics information,
searching for the node at height $i$ takes $O(\log_B \log_B n)$ I/O span and work.

If a new node is created for the left or right spine after the dividing process,
we add the node to the front of $L_i$ or $R_i$ respectively.
This also takes $O(\log_B \log_B n)$ I/O span and work.
Finally,
after the fusing operations complete,
we must also update the pointers to the left/right spines of the trees.
Recall in \cref{fig:bfast-conflict} that part of the right spine of $T_1$
is being covered by the tree $T_3,$
specifically all nodes with a height lower than $T_3$'s root are not on
the right spine of the output tree.
More generally,
suppose $S$ is the set of all trees that were grouped together and has at least
one node from their right spine which is part of the right spine of the output
tree.
Then $T \in S$ will have all nodes along its right spine less than the height
of the next smaller tree in $S$ covered.
These are the nodes that should be removed from their list of spines.
To do this,
we simply find the height, $h,$ of the next smaller tree in $S$ and perform a
sequential split on $T.$
Then we use \bjoin{} to join all trees together.

\section{Union}
\label{sec:fast-union}

\subsection{Algorithm Description}
Let $T_1$ and $T_2$ be two B-trees with $n$ and $m$ keys respectively.

The union algorithm is a simple $d$-way divide and conquer algorithm,
where $d = \sqrt{n + m}.$
If $m < \sqrt{n + m},$
then we have reached the base case,
and we call a I/O work inefficient union algorithm.
Let $|T_1| \geq |T_2|,$
as we can swap the two trees if this is not the case.
The I/O work inefficient union algorithm uses all the keys in $T_2$ and uses them as
keys to split on $T_1$ using the $\msplit{}$ operation.
Then we use the $\mjoin{}$ operation to combine the $d$ subtrees and the $d - 1$
keys we split on.

Otherwise,
we search $T_1$ and $T_2$ for $d - 1$ keys,
which we will use to split each tree into $d$ subtrees using the $\msplit{}$
operation.
We denote the subtrees of $T_1$ as $T_{1, \{1, \dots, d \}}$
and $T_2$ as $T_{2, \{1, \dots, d\}}$.
The $i$-th key which we split on is the global $(n + m) \cdot i / d)$-th key.
This ensures that $|T_{1, i}| + |T_{2, i}|$ is all approximately
$\frac{n + m}{d}$.
We then spawn $O(d)$ threads,
and recursively call the union algorithm on each pair of subtrees
$T_{1, i}$ and $T_{2, i}$.
Then we use the $\mjoin{}$ operation to combine the $d$ subtrees and the
$d - 1$ keys we split on.

\begin{algorithm}[t]
\small
\caption{$\fastunion{}(T_1, T_2)$\label{algo:fastunion}}
\KwIn{Two B-trees $T_1$ and $T_2$}
\KwOut{A B-tree $T$ with all keys in $T_1 \cup T_2$}
\lIf{$|T_2| > |T_1|$}{
  $T_1, T_2 \gets T_2, T_1$
}
\lIf{$|T_2| < \sqrt{|T_1| + |T_2|}$}{
  \Return{$\union{}(T_1, T_2)$}
}
$d \gets \sqrt{|T_1| + |T_2|}$\\
\parForEach{$1 \leq i \leq d$}{
  $k_i \gets \twobinarysearch{}(\frac{(|T_1| + |T_2|) i}{d}, T_1, T_2)$
}
$\langle T_{1, 1}, T_{1, 2}, \dots, T_{1, d + 1} \rangle \gets \msplit{}(T_1, k_{1..d})$\\
$\langle T_{2, 1}, T_{2, 2}, \dots, T_{2, d + 1} \rangle \gets \msplit{}(T_2, k_{1..d})$\\
\parForEach{$1 \leq i \leq d + 1$}{
  $T_i \gets \fastunion{}(T_{1, i}, T_{2, i})$\\
}
\Return{\mjoin{}($T_{1..d + 1}, k_{1..d}$)}
\end{algorithm}

\subsection{Cost Analysis}

Due to space constraints,
we have put the details of the analysis for the base case of the union algorithm
to \cref{sec:union-base-case-cost}.

\begin{restatable}[Union Algorithm Base Case Cost]{lemma}{unionBaseCaseCost}
  \label{lem:union-base-case-cost}
  The \union{} algorithm takes $O(\log_B m \cdot \log_2 \log_B n + \log_B n)$
  I/O span and $O(m \log_B n)$ I/O work.
\end{restatable}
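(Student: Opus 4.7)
My plan is to decompose the base-case algorithm \union{} into its three constituent phases — extracting the keys of $T_2$, splitting $T_1$ with those keys via \msplit{}, and reassembling via \mjoin{} — and show that \msplit{} dominates the I/O work while \mjoin{} dominates the I/O span. First I would gather the $m=|T_2|$ keys of $T_2$ into a contiguous buffer by traversing $T_2$, which costs $O(m/B)$ I/O work and $O(\log_B m)$ I/O span and is therefore subdominant in both measures. Next I would invoke $\msplit(T_1, k_{1..m})$ and apply \cref{thm:multi-split-cost} with $d=m$ and $n$ as the size of $T_1$: this phase contributes $O(m \log_B n)$ I/O work and $O(\log_B m + \log_B n) = O(\log_B n)$ I/O span.

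The heart of the analysis is bounding the \mjoin{} call on the $m+1$ resulting subtrees and $m$ separator keys. I would argue that \mjoin{} is realized as $\lceil \log_B m \rceil$ rounds of \bjoinfast{}, each round coalescing $B$ consecutive trees and keys into one. Using the stated bound $O(\log_B n)$ I/O work and $O(\log_B \log_B n) \leq O(\log_2 \log_B n)$ I/O span per \bjoinfast{} call, I would bound round $i$ by $O((m/B^i) \log_B n)$ I/O work executed in parallel with span $O(\log_2 \log_B n)$. Summing a geometric series over all rounds yields total \mjoin{} work $O((m/B)\log_B n)$, which is absorbed into the \msplit{} bound, and total \mjoin{} span $O(\log_B m \cdot \log_2 \log_B n)$. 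Adding the three phases then gives total work $O(m \log_B n)$ and total span $O(\log_B n + \log_B m \cdot \log_2 \log_B n)$, matching the claim.

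The main obstacle I anticipate is verifying that the precondition of \bjoinfast{} — each input tree has at most one node with $\geq B$ keys on its left and right spines — is maintained throughout all rounds of \mjoin{}, so that the favorable $O(\log_B \log_B n)$ per-call span actually applies. I would argue this as an invariant, established initially by the \textsf{SimpleRebalance} pass that each \msplit{} output subtree undergoes (per \cref{sec:multi-split}) and preserved across each round by the dividing-then-rebalancing logic of Steps 3--5 of \bjoinfast{} together with the spine-maintenance procedure of \cref{sec:maintain-pointers}. A secondary concern is that the subtrees produced by \msplit{} have widely varying heights (between $0$ and $\log_B n$), so I would note that the per-round \bjoinfast{} cost depends only on the tallest participating tree's height, which is always at most $\log_B n$, so the worst-case per-round bound continues to apply uniformly.
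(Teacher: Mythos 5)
Your proposal is correct and follows essentially the same route as the paper: the paper's proof simply observes that the cost of \union{} is dominated by the \mjoin{} call and invokes \cref{thm:multiway-join-cost} with $d=m$ (together with \cref{thm:multi-split-cost} for the split), which is exactly your decomposition, just with the round-by-round \bjoinfast{} analysis unpacked inline. The only quibble is your intermediate claim that the \mjoin{} phase costs only $O((m/B)\log_B n)$ I/O work; the component lemmas (\cref{lem:multiway-join-fuse-cost,lem:multiway-join-rebalance-bounds,lem:preprocess-cost}) actually charge $O(B\log_B n)$ work per \bjoinfast{} call plus $O(m\log_B n)$ for preprocessing, so \mjoin{} is co-dominant at $O(m\log_B n)$ rather than absorbed --- but this does not affect the final bound.
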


\begin{lemma}
  Algorithm \ref{algo:fastunion} takes
  $O(\log_B m \cdot \log_2 \log_B n)$ I/O span.
\end{lemma}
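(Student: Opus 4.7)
The plan is to solve the recurrence for the I/O span of \fastunion{}. Let $S(n,m)$ denote the span for input trees of sizes $n \geq m$. In the recursive case (when $m \geq \sqrt{n+m}$), with $d = \sqrt{n+m}$ splitters, we have
\[
S(n,m) \leq T_{\text{setup}}(n,m) + \max_i S(n_i, m_i) + T_{\text{join}}(n,m),
\]
where the maximum is over the $d+1$ parallel subproblems. The setup consists of the $d$ parallel \twobinarysearch{} calls, contributing $O(\log_B n)$ span, and the two \msplit{} calls, each contributing $O(\log_B d + \log_B n) = O(\log_B n)$ by \cref{thm:multi-split-cost}. The \mjoin{} cost is bounded using the analysis in \cref{sec:multi-join}.

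First I would bound the depth of recursion. Because the splitters are chosen at evenly spaced global ranks, each subproblem satisfies $|T_{1,i}| + |T_{2,i}| = O((n+m)/d) = O(\sqrt{n+m})$. Iterating, the combined size at level $k$ is at most $(n+m)^{1/2^k}$, so the recursion terminates after $O(\log_2 \log_2(n+m)) = O(\log_2 \log_B n)$ non-base levels. At the base case, \cref{lem:union-base-case-cost} bounds the span by $O(\log_B m_k \cdot \log_2 \log_B n_k + \log_B n_k)$, which is $O(\log_B m \cdot \log_2 \log_B n + \log_B n)$ since $m_k \leq m$ and $n_k \leq n$.

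Next I would sum the per-level contributions over the non-base levels. The setup costs $\sum_k O(\log_B n_k)$ telescope to $O(\log_B n)$ as a geometric sum since $\log_B n_k = (\log_B n)/2^k$. For the join costs, assuming the \mjoin{} span has the shape $O(\log_B d \cdot \log_B \log_B n)$ per call described in \cref{sec:multi-join}, each level contributes $O(\log_B d_k \cdot \log_B \log_B n_k)$, and the geometric series in $\log_B d_k$ sums to a quantity at most $O(\log_B n \cdot \log_B \log_B n)$ when bounded naively.

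The hard part will be sharpening this summed join cost from the loose $O(\log_B n \cdot \log_B \log_B n)$ bound down to $O(\log_B m \cdot \log_2 \log_B n)$. My intended approach is to charge the dominating height factor of each \mjoin{} to the smaller tree $m_k$ rather than the combined size $n_k+m_k$: \bjoinfast{} along a spine depends on the height difference between the trees being merged, which is controlled by the smaller tree's height at that level, and the recursion guarantees $m_k \leq m$ for all $k$. Combining this per-level charge with the $O(\log_2 \log_B n)$ recursion depth should yield the bound. If the per-level charging argument is not clean for every split pattern, I would instead prove $S(n,m) = O(\log_B m \cdot \log_2 \log_B n)$ by induction on the recursion depth, verifying that the recursive step $T_{\text{setup}} + T_{\text{join}} + S(n_i,m_i)$ preserves the bound, with the base case supplying the $\log_B m \cdot \log_2 \log_B n$ term and the pre-base overhead absorbed into a constant-factor slack.
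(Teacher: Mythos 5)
Your overall architecture matches the paper's: set up the recurrence, observe that the setup/split costs decay geometrically to $O(\log_B n)$, and bound the summed \mjoin{} costs. You also correctly isolate the crux: the naive sum of the join costs gives $\log_B(n+m)\cdot\log_2\log_B n$, and the whole point is to replace $\log_B(n+m)$ by $\log_B m$. But your proposed resolution of that crux is misdirected, and the actual resolution is a one-line observation that you never state. The algorithm only recurses when $m \ge \sqrt{n+m} = d$; hence $\log_B d = \tfrac12\log_B(n+m) \le \log_B m$ already at the top level. Since $\log_B d_k = \log_B(n+m)/2^{k+1}$ decays geometrically, $\sum_k \log_B d_k \le \log_B(n+m) \le 2\log_B m$, so the total join cost is $O(\log_B m\cdot\log_2\log_B n)$ with no charging argument needed. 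This inequality $\sqrt{n+m}\le m$ is exactly what the paper's proof invokes (``Recall that $m>\sqrt{n+m}$, since otherwise we would be in the base case'') before writing and solving the recurrence $S(n+m)=S(\sqrt{n+m})+O(\log_B m\cdot\log_2\log_B(n+m))$.

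Your proposed charging scheme, by contrast, does not map onto how \mjoin{} actually spends its span. The $\log_B d$ factor in \cref{thm:multiway-join-cost} counts the number of rounds of \bjoinfast{} (the tree count shrinks by a factor of $B$ per round), and the $\log_2\log_B n$ factor is the per-round cost of the prefix-sum over spine arrays; neither is a ``height difference between the trees being merged,'' so charging the dominating factor to the smaller tree's height at each level is not a well-defined step and would not obviously close the argument. The fallback induction you mention could be made to work, but only if the inductive invariant encodes the same fact ($d_k\le m$ at every recursive level, with geometric decay), so it does not avoid the missing observation. One further small point: the base case contributes an additive $O(\log_B n)$ term (as does the telescoped setup cost), which you correctly carry but which the lemma as stated drops; the paper's own proof has the same slip, and the term reappears in the final theorem, so this is cosmetic rather than substantive.
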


\begin{proof}
  The I/O span of finding $d - 1,$
  where $d = \sqrt{n + m},$
  keys and splitting along them costs
  $O(\log_B (\sqrt{n + m})),$
  and the I/O span of joining all the recursive components costs
  $O(\log_B (\sqrt{n + m}) \cdot \log_2 \log_B n).$
  Recall that $m > \sqrt{n + m},$
  since otherwise we would be in the base case of the algorithm.
  Therefore,
  the total I/O span of the above operations is $O(\log_B m \cdot \log_2 \log_B n).$

  Then for the next recursive call,
  the problem size shrinks to $\sqrt{n + m}.$
  The I/O span of the base case is also $O(\log_B m \cdot \log_2 \log_B n).$
  Therefore,
  the cost of the I/O span is upper bounded by

  $$
    S(n + m) = S(\sqrt{n + m}) + O(\log_B m \cdot \log_2 \log_B (n + m)).
  $$

  Solving for this recurrence relation gives us
  $O(\log_B m \cdot \log_2 \log_B n),$
  when $m < n.$
\end{proof}

\begin{lemma}
  The total I/O work of the \fastunion{} algorithm is $O(m \log_B \frac{n}{m}).$
\end{lemma}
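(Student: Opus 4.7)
The plan is to prove the bound by strong induction on $n+m$, with the inductive hypothesis $W(m,n) \le c\cdot m\log_B(n/m+2)$ for a sufficiently large constant $c$. The two cases of the induction mirror the two branches of \cref{algo:fastunion}.

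For the base case, when $m\le \sqrt{n+m}$, the algorithm invokes \union{}, which by \cref{lem:union-base-case-cost} costs $O(m\log_B n)$ work. In this regime $m^2\le n+m$, so $m=O(\sqrt{n})$ and therefore $\log_B n = \Theta(\log_B(n/m))$; the work-inefficient union's cost thus already matches the claimed bound.

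For the inductive case ($m>\sqrt{n+m}$, so $d:=\sqrt{n+m}$ is well-defined), I would first tally the non-recursive work per call: the $d$ invocations of \twobinarysearch{} contribute $O(d\log_B n)$; the two \msplit{} calls contribute $O(d\log_B n)$ each by \cref{thm:multi-split-cost}; and the terminating \mjoin{} contributes $O(d\log_B n)$ (from the analysis of \cref{sec:multi-join}, summing the per-round $O(B\log_B n)$ costs of the repeated \bjoinfast{} rounds). The algorithm recurses on $d+1$ subproblems $(T_{1,i},T_{2,i})$ whose splitters are chosen by global rank so that $|T_{1,i}|+|T_{2,i}|=\Theta((n+m)/d)$ for every $i$. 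Letting $m_i=\min(|T_{1,i}|,|T_{2,i}|)$ and $n_i=\max(|T_{1,i}|,|T_{2,i}|)$, I have $\sum_i m_i\le m$ (the smaller side of each pair is bounded by the slice of $T_2$) and $n_i\le O((n+m)/d)$. Applying the inductive hypothesis to each subproblem yields $\sum_i W(m_i,n_i)\le c\sum_i m_i\log_B(n_i/m_i+2)$.

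The analytic core is to bound $\sum_i m_i\log_B(n_i/m_i)$ using $n_i=O((n+m)/d)$ and the concavity of $x\mapsto -x\log_B x$. Writing $\sum_i m_i\log_B(n_i/m_i)\le (\sum_i m_i)\log_B(O((n+m)/d))-\sum_i m_i\log_B m_i$, and applying Jensen's inequality (equivalently, a standard entropy bound) to the last sum, gives $\sum_i m_i\log_B(n_i/m_i)\le m\log_B((n+m)/m)+O(m)=O(m\log_B(n/m))$. Combined with the non-recursive $O(d\log_B n)$ overhead, I obtain $W(m,n)\le c\cdot m\log_B(n/m+2)$ for suitably chosen $c$, closing the induction.

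The main obstacle I expect is the absorption of the $O(d\log_B n)$ non-recursive overhead into the inductive bound, which is delicate when $m$ and $n$ are close (so that $\log_B(n/m)$ is small). The key observation that saves it is that in the recursive branch $d=\sqrt{n+m}<m$, so the overhead is at most $O(m\log_B n/\sqrt{m})$, which together with the $+2$ slack in the hypothesis and the fact that the recursion depth is only $O(\log\log(n+m))$ suffices to bound the accumulated lower-order terms. I would verify this boundary regime carefully, potentially by splitting the analysis along $\log_B(n/m)\ge 1$ versus $<1$ and using the base-case absorption for the latter.
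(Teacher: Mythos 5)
Your overall strategy---induction on $n+m$ with the log-sum/Jensen inequality controlling the recursive contributions---is a legitimate alternative to the paper's argument, which instead sums the per-level costs of the recursion tree explicitly, observes that the resulting series $\sum_i 2^{-i}(\log_B t)\,t^{1-1/2^{i+1}}$ is leaf-dominated, and bounds the last level by $O(m\log_B(n/m))$ using the constraint that the number of subproblems at any level never exceeds $m$. Your base-case treatment and your bookkeeping of the non-recursive work ($O(d\log_B n)$ per call, $\sum_i m_i\le m$) agree with the paper.

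However, the induction as you set it up cannot close, and the step where you claim it does is wrong. First, the recursive contributions alone can saturate the entire budget: the log-sum inequality gives $\sum_i m_i\log_B(n_i/m_i+2)\le M\log_B((N+2M)/M)$ with $M\le m$ and $N\le n+m$, which is at best $m\log_B(n/m+3)$ --- already at least as large as the target $m\log_B(n/m+2)$, and strictly larger by $\Theta(m/\log B)$ once the constant hidden in $n_i=O((n+m)/d)$ enters the logarithm. There is therefore no slack into which the additive $O(d\log_B n)$ overhead can be absorbed, for any choice of $c$; a one-step induction on a hypothesis of the form $c\,m\log_B(n/m+2)$ simply cannot account for per-call overheads whose worst case co-occurs with the worst case of the recursive part. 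Second, your proposed rescue, that the overhead is at most $O(m\log_B n/\sqrt m)=O(\sqrt m\log_B n)$, is false: $d=\sqrt{n+m}>\sqrt m$, so $d\log_B n$ is not $O(\sqrt{m}\log_B n)$; the true bound is only $d\log_B n<m\log_B n$, which at the threshold $m\approx\sqrt n$ is a constant fraction of the entire claimed bound. Appealing to the $O(\log\log(n+m))$ recursion depth does not help inside a single inductive step. To repair the argument you would need to take the internal-node overheads out of the induction and sum them over the levels of the recursion tree, as the paper does (the level costs grow geometrically toward the leaves, and the leaf level has at most $m$ subproblems, yielding $O(m\log_B(n/m))$), reserving your Jensen/log-sum argument for aggregating the base cases --- which is essentially the paper's own appeal to Lemma B.2 of Blelloch et al.
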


\begin{proof}
  The total cost of the base case for the \fastunion{} algorithm is
  $O(m \log_B \frac{n}{m})$ from Lemma \ref{lem:union-base-case-cost} in the
  Appendix.
  Thus,
  we focus on the cost of the recursive calls.

  We see that the I/O work cost is dominated by the cost of \msplit{}, \mjoin{},
  and searching for the split keys.
  All of these operations take $O(\sqrt{n + m}\log_B (n + m))$ I/O work.

  We perform a similar analysis to the one in Blelloch et al.
  \cite{blelloch2020optimal}.
  Consider the recurrence tree of this algorithm,
  which is a $\sqrt{n + m}$-way tree,
  and the cost of the current node is $\sqrt{n + m} \log_B (n + m).$
  Since the recursion stops when $m < \sqrt{n + m},$
  the recurrence tree is not balanced and some branches will reach the base
  case earlier than others.
  However,
  the cost of a node at the same level also costs the same,
  and decreases as the tree goes deeper.
  Let $t = n + m,$
  then the total I/O work of \fastunion{} is:

  $$  t^{1 / 2} \log_B t + t^{1 / 2} \cdot t^{1 / 4} \log_B t^{1 / 2} + \dots = \sum_i \frac{\log_B t}{2^i} t^{1 - 1 / 2^{i + 1}}. $$

  We can see the above recurrence is leaf dominated.
  For all $i,$ $t^{1 - 1 / 2^{i + 1}}$ is the number of split keys which are
  being searched.
  Naturally the total number of split keys can never be more than $m,$
  else we would have already reached the base case,
  meaning that $t^{1 - 1 / 2^{i + 1}} \leq m.$
  Let $x = t^{1 / 2^{i + 1}},$
  then the total leaf cost is

  $$
    O(t^{1 - 1 / 2^{i + 1}} \cdot \log_B t^{1 / 2^{i + 1}}) = O\left(\frac{t}{x} \log_B x\right).
  $$

  Clearly the above function decreases as $x$ approaches infinity,
  and since $t^{1 - 1 / 2^{i + 1}} \leq m,$
  we know that $x \geq t / m.$
  Letting $x = t / m$ gives us the maximum value of the function,
  which is $O(m \log_B (\frac{n}{m})),$
  giving us our desired I/O work bounds.
\end{proof}

Combining the results above we have the main theorem of this paper. 

\btreemain*

We show in \cref{sec:additional-set-operations} how to achieve
the above bounds for the other set operations.

\section{Conclusion}
In this paper,
we have introduced the parallel set operations on B-trees that achieve provable
I/O efficiency in the context of a novel cost model, 
the \ourmodel{},
which offers a way to analyze the I/O costs of algorithms in the fork-join
setting.
This captures the I/O complexity of algorithms executed under asynchronous
fork-join parallelism,
unlike the Parallel External Memory (PEM) model,
which assumes synchronized threads at each step of the algorithm.
Our model provides a more realistic abstraction for analyzing parallel
algorithms that rely on asynchronous scheduling.
We showed how to adapt the join-based paradigm to B-trees by developing
new algorithms for \mjoin{}, and \msplit{}.
These primitives serve as building blocks for implementing more complex
operations,
such as the union, intersection, and difference operations.
This allows us to achieve a near-optimal I/O work of $O\left(m\log_B\left(\frac{n}{m}\right)\right)$,
and I/O span of $O(\log_B m \cdot \log_2 \log_B n +\log_B n)$.
Potential future work would be to reduce the I/O work to the optimal
$O\left(\frac{m}{B}\log_B\left(\frac{n}{m}\right)\right)$ I/O complexity,
and reduce the I/O span to $O\left(\log_B (n + m)\right).$

\clearpage
\bibliography{main}

\newcommand{\noopsort}[1]{} \newcommand{\singleletter}[1]{#1}
\begin{thebibliography}{10}

\bibitem{adams1992implementing}
Stephen Adams.
\newblock Implementing sets effciently in a functional language.
\newblock Technical Report CSTR 92-10, University of Southampton, 1992.

\bibitem{adams1993functional}
Stephen Adams.
\newblock Efficient sets---a balancing act.
\newblock {\em Journal of functional programming}, 3(04):553--561, 1993.

\bibitem{DBLP:journals/corr/AkhremtsevS15}
Yaroslav Akhremtsev and Peter Sanders.
\newblock Fast parallel operations on search trees.
\newblock {\em CoRR}, abs/1510.05433, 2015.
\newblock URL: \url{http://arxiv.org/abs/1510.05433}, \href {https://arxiv.org/abs/1510.05433} {\path{arXiv:1510.05433}}.

\bibitem{akhremtsev16}
Yaroslav Akhremtsev and Peter Sanders.
\newblock Fast parallel operations on search trees.
\newblock In {\em {IEEE} International Conference on High Performance Computing (HiPC)}, 2016.

\bibitem{DBLP:conf/ipps/ArgeGS10}
Lars Arge, Michael~T. Goodrich, and Nodari Sitchinava.
\newblock Parallel external memory graph algorithms.
\newblock In {\em {IPDPS}}, pages 1--11. {IEEE}, 2010.

\bibitem{arora2001thread}
Nimar~S. Arora, Robert~D. Blumofe, and C.~Greg Plaxton.
\newblock Thread scheduling for multiprogrammed multiprocessors.
\newblock {\em Theory of Computing Systems}, 34(2):115--144, 2001.

\bibitem{blelloch2016just}
Guy~E. Blelloch, Daniel Ferizovic, and Yihan Sun.
\newblock Just join for parallel ordered sets.
\newblock In {\em {ACM} Symposium on Parallelism in Algorithms and Architectures (SPAA)}, 2016.

\bibitem{blelloch2019optimal}
Guy~E. Blelloch, Jeremy~T. Fineman, Yan Gu, and Yihan Sun.
\newblock Optimal parallel algorithms in the binary-forking model.
\newblock {\em arXiv preprint 1903.04650}, 2019.

\bibitem{blelloch2020optimal}
Guy~E. Blelloch, Jeremy~T. Fineman, Yan Gu, and Yihan Sun.
\newblock Optimal parallel algorithms in the binary-forking model.
\newblock In {\em {ACM} Symposium on Parallelism in Algorithms and Architectures (SPAA)}, pages 89--102, 2020.

\bibitem{Blelloch1998}
Guy~E. Blelloch and Margaret Reid-Miller.
\newblock Fast set operations using treaps.
\newblock In {\em {ACM} Symposium on Parallelism in Algorithms and Architectures (SPAA)}, pages 16--26, 1998.

\bibitem{blumofe1999scheduling}
Robert~D. Blumofe and Charles~E. Leiserson.
\newblock Scheduling multithreaded computations by work stealing.
\newblock {\em Journal of the ACM}, 46(5):720--748, 1999.

\bibitem{cao2023parallel}
Nairen Cao and Jeremy~T Fineman.
\newblock Parallel exact shortest paths in almost linear work and square root depth.
\newblock In {\em Proceedings of the 2023 Annual ACM-SIAM Symposium on Discrete Algorithms (SODA)}, pages 4354--4372. SIAM, 2023.

\bibitem{dhulipala2022pac}
Laxman Dhulipala, Guy~E. Blelloch, Yan Gu, and Yihan Sun.
\newblock {PaC-trees}: Supporting parallel and compressed purely-functional collections.
\newblock In {\em ACM Conference on Programming Language Design and Implementation (PLDI)}, 2022.

\bibitem{dhulipala2019low}
Laxman Dhulipala, Guy~E Blelloch, and Julian Shun.
\newblock Low-latency graph streaming using compressed purely-functional trees.
\newblock In {\em ACM Conference on Programming Language Design and Implementation (PLDI)}, pages 918--934, 2019.

\bibitem{dhulipala2022hierarchical}
Laxman Dhulipala, David Eisenstat, Jakub Lacki, Vahab Mirrokni, and Jessica Shi.
\newblock Hierarchical agglomerative graph clustering in poly-logarithmic depth.
\newblock {\em arXiv preprint:2206.11654}, 2022.

\bibitem{dhulipala2019sage}
Laxman Dhulipala, Charlie McGuffey, Hongbo Kang, Yan Gu, Guy~E Blelloch, Phillip~B Gibbons, and Julian Shun.
\newblock Sage: Parallel semi-asymmetric graph algorithms for nvrams.
\newblock {\em arXiv preprint arXiv:1910.12310}, 2019.

\bibitem{dong2021efficient}
Xiaojun Dong, Yan Gu, Yihan Sun, and Yunming Zhang.
\newblock Efficient stepping algorithms and implementations for parallel shortest paths.
\newblock In {\em {ACM} Symposium on Parallelism in Algorithms and Architectures (SPAA)}, pages 184--197, 2021.

\bibitem{gu2022analysis}
Yan Gu, Zachary Napier, and Yihan Sun.
\newblock Analysis of work-stealing and parallel cache complexity.
\newblock In {\em {SIAM} Symposium on Algorithmic Principles of Computer Systems (APOCS)}, pages 46--60. SIAM, 2022.

\bibitem{marlow2010haskell}
Simon Marlow et~al.
\newblock Haskell 2010 language report.
\newblock {\em Available online http://www. haskell. org/(May 2011)}, 2010.

\bibitem{Leda99}
Kurt Mehlhorn and Stefan N\"{a}her.
\newblock {\em LEDA: A Platform for Combinatorial and Geometric Computing}.
\newblock Cambridge University Press, New York, NY, USA, 1999.

\bibitem{shen2022many}
Zheqi Shen, Zijin Wan, Yan Gu, and Yihan Sun.
\newblock Many sequential iterative algorithms can be parallel and (nearly) work-efficient.
\newblock In {\em {ACM} Symposium on Parallelism in Algorithms and Architectures (SPAA)}, 2022.

\bibitem{sun2019implementing}
Yihan Sun and Guy Blelloch.
\newblock Implementing parallel and concurrent tree structures.
\newblock In {\em {ACM} Symposium on Principles and Practice of Parallel Programming (PPOPP)}, page 447–450, 2019.

\bibitem{ptreedb}
Yihan Sun, Guy~E Blelloch, Wan~Shen Lim, and Andrew Pavlo.
\newblock On supporting efficient snapshot isolation for hybrid workloads with multi-versioned indexes.
\newblock {\em Proceedings of the VLDB Endowment (PVLDB)}, 13(2):211--225, 2019.

\bibitem{sun2018pam}
Yihan Sun, Daniel Ferizovic, and Guy~E Blelloch.
\newblock {PAM}: Parallel augmented maps.
\newblock In {\em {ACM} Symposium on Principles and Practice of Parallel Programming (PPOPP)}, 2018.

\bibitem{Tarjan83}
Robert~Endre Tarjan.
\newblock {\em Data Structures and Network Algorithms}.
\newblock Society for Industrial and Applied Mathematics, Philadelphia, PA, USA, 1983.

\bibitem{wang2023fast}
Letong Wang, Xiangyun Ding, Yan Gu, and Yihan Sun.
\newblock Fast and space-efficient parallel algorithms for influence maximization.
\newblock {\em Proceedings of the VLDB Endowment (PVLDB)}, 17(3), 2023.

\end{thebibliography}

\clearpage
\appendix
\section{Background for \join{}-based Algorithms on Binary Trees}
\label{app:join-background}

For completeness, we briefly introduce the background of join-based algorithms on binary trees. For binary trees, the $\funcjoin(L,k,R)$ function takes two search trees $T_L$ and $T_R$, and one key $k$ in the middle, such that $\forall
k_1\in T_L$ and $k_2\in T_R$, $k_1<k<k_2$, and returns a valid balanced
tree containing all elements in $L\cup\{k\}\cup R$. 
As the inversion to \funcjoin{}, a \funcsplit{}$(T,k)$ is defined to return $\langle T_L,k',T_R \rangle$,
where $T_L$ ($T_R$) contains all keys in $T$ smaller (larger) than $k$, and $k'$ is the node containing $k$ if $k\in T$, and null otherwise. 
In this case, an example of the \algname{Union} algorithm can be implemented as illustrated in \cref{fig:framework}. 
Note that the \algname{Union} algorithm is independent of the balancing schemes used, as long as the \funcjoin{} and \funcsplit{} functions are properly supported. 

\section{Related Work}
\label{app:related}

The \funcjoin{} operation was first proposed by Tarjan~\cite{Tarjan83} on red-black trees. In 1992, 
Adams used \funcjoin{} as a primitive for weight-balanced trees to implement set functions for union, intersection,
and difference~\cite{adams1992implementing,adams1993functional}. 
Adams' algorithms were proposed in an international competition for the Standard ML community, which is about
implementations on ``set of integers''. 
Adams won the ``elegance'' award in the competition, meaning that the algorithm is elegant yet reasonably efficient among all participants in the competition. 
However, Adams' original paper only informally argued a loose work bound, which is $O(m+n)$ for operating on two sets with sizes $n$ and $m\le n$. 

Blelloch and Reid-Miller~\cite{Blelloch1998} first considered using join-based algorithms on parallel data structures.
Their paper proposed parallel algorithms on treaps based on \funcjoin{}, and first show that these algorithms have $O(m\log(n/m))$ work and polylogarithmic span (with high probability).
The algorithms are work-efficient because the work matches the lower bound of comparisons needed to combine two ordered sets. 
Later Blelloch et al.~\cite{blelloch2016just} studied the join-based algorithms, and showed work-efficiency and $O(\log n\log m)$ span for other balancing schemes, including AVL trees, read-black trees, and weight-balanced trees. 
In 2020~\cite{blelloch2020optimal}, the span bound for weight-balanced trees was improved to $O(\log n)$. 

In 2015, Akhremtsev and Sanders~\cite{akhremtsev16} proposed join-based algorithms on $(a,b)$-trees. 
To make the algorithm more space-efficient and I/O-friendly in practice, Dhulipala et al. proposed C-trees~\cite{dhulipala2019low} and PaC-trees~\cite{dhulipala2022pac}. The idea is to allow each tree node (or just the leaves) to store a block of entries instead of one. Both C-trees and PaC-trees are binary. 
While their goal is to make the algorithms more cache-friendly in practice, no I/O bounds were shown. 
To the best of our knowledge, our work is the first to study the I/O cost of the join-based algorithms. 

To achieve a better span bound, some techniques of our algorithms are inspired by existing work, such as using a list of pointers based on the height of subtrees~\cite{akhremtsev16}, and using $\sqrt{m+n}$ way divide-and-conquer with a work-inefficient base case~\cite{blelloch2020optimal}. 
Our solution provides non-trivial and effective adaptions to use these ideas to achieve improved I/O bounds on B-trees. 

Due to the elegance of the algorithm, many libraries later 
used the join-based algorithms for their implementation for sequences, sets, or maps 
(e.g., Haskell~\cite{marlow2010haskell}, the LEDA library~\cite{Leda99}, PAM~\cite{sun2018pam},
Aspen~\cite{dhulipala2019low}, CPAM~\cite{dhulipala2022pac}). 

\section{B-Way Join Analysis}
\label{sec:b-way-join-analysis}
We will eventually show the I/O span of the \bjoin{} operation is $O(h_{\max} - h_{\min}),$
and the I/O work is $O(B \cdot (h_{\max} - h_{\min}))$, where $h_{\max}$ and $h_{\min}$ are the maximum and minimal tree heights among the input trees. 
First, we must show that the height of the output tree of each recursive call
does not increase the height of the input trees by more than a constant,
as this could lead to an increase in the I/O work and span.

\begin{lemma}
  \label{lemma:multiway-join-rebalance-bounds}
  Suppose we have an unbalanced B-tree $T$ with $b$ children and $b - 1$ keys.
  Let $C$ be the subset of children of $T$ with height $h$,
  and the rest of the children have height $h - 1$.
  Then we can rebalance the tree to have a valid B-tree of height $h$ or
  $h + 1.$
  If the tree is of height $h + 1$,
  then the root will have at most $|C|$ keys.
\end{lemma}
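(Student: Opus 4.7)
The plan is to exhibit the \fname{SimpleRebalance} procedure promised by the statement explicitly and then count keys. The intuition is to ``flatten'' each taller child $v\in C$ by absorbing its root keys into the root of $T$, equalizing all child heights to $h-1$; if this leaves the root with more than $B$ keys, split it into several height-$h$ nodes sitting beneath a fresh height-$(h{+}1)$ root.

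Concretely, for each $v\in C$ I would remove $v$ from the child list of $T$, insert the keys stored at the root of $v$ into $T$'s key list in the slot that $v$ occupied, and splice $v$'s own subtrees into $T$'s child list at that same position. Because $v$ was itself a valid B-tree of height $h$, its children have height $h-1$, so after processing every $v\in C$ all remaining children of $T$ have height $h-1$. Writing $k_v\le B$ for the number of root keys in $v$, the root of $T$ now carries $R := (b-1) + \sum_{v\in C} k_v$ keys and has $R+1$ children. If $R\le B$, the root already satisfies the B-tree size invariant and every subtree is a valid height-$(h-1)$ B-tree, so $T$ is a valid B-tree of height $h$ and we are done.

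If instead $R>B$, I would partition the $R+1$ children into $q := \lceil (R+1)/(B+1)\rceil$ consecutive groups of at most $B+1$ children each, with sizes balanced to $\lfloor (R+1)/q\rfloor$ or $\lceil (R+1)/q\rceil$. Each group becomes the child list of a new height-$h$ node that takes its keys from the corresponding consecutive chunk of $T$'s key list, while the $q-1$ keys separating the groups migrate to a brand-new root at height $h+1$. Using the standard B-tree constraint $b-1\le B$ on the unbalanced root of $T$, we get $R\le B + |C|\cdot B = (|C|+1)B$, and therefore
\[
q \;=\; \Big\lceil \tfrac{R+1}{B+1}\Big\rceil \;\le\; \Big\lceil \tfrac{(|C|+1)B+1}{B+1}\Big\rceil \;\le\; |C|+1,
\]
so the new root carries $q-1\le |C|$ keys, exactly as claimed, and the height of the rebalanced tree is $h+1$.

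The main obstacle is verifying that every newly created height-$h$ node meets the \emph{lower} bound $\lceil B/2\rceil$ on the number of keys, not merely the upper bound $B$. Since the split is only invoked when $R+1\ge B+2$ and we force $q\ge 2$, balancing group sizes gives each group at least $\lceil (B+2)/2\rceil$ children, which is enough to keep the new nodes within the valid B-tree size range; a small amount of boundary-merging handles the extreme case where $R$ is only just above $B$. Once this is established, a straightforward induction on the constructed structure shows it is a valid B-tree of the claimed height, completing the proof.
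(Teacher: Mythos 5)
Your proposal is correct and follows essentially the same route as the paper's proof: promote each tall child's root keys into the root (reattaching its subtrees as children), then, if the root overflows, split it into one extra level and bound the new root's key count by $|C|$ via the $(|C|+1)B$ counting. Your version is in fact slightly more careful than the paper's, since you make the grouping explicit and at least acknowledge the minimum-fill constraint $\lceil B/2\rceil$ on the newly created height-$h$ nodes, which the paper's proof silently skips.
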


\begin{proof}
  For each child $c \in C$,
  we can bring the keys up to the root node and reattach the children of $c$
  as children of the root node.
  Suppose $c$ contains $b^\prime$ keys,
  then the number of keys at the root and the number of children both increase
  by $b^\prime.$

  The increase in the number of keys is straightforward.
  The number of children increases by the same amount because bringing $c$ up
  adds $b^\prime + 1$ children,
  and we lost a child by bringing up $c.$

  Performing this operation across all nodes in $C$ will result in an
  at most $|C|B$ keys at the root.
  If the root node also contained $B$ keys initially,
  then there would be $(|C| + 1)B$ keys at the root,
  which would then be split into a tree of height $h + 1,$
  and with at most $|C|$ keys at the root.

  We know that the split will only ever occur once since $|C|$ is upperbounded
  to be at most $B,$
  thus a second split would not occur.
\end{proof}

\begin{lemma}
  \label{lemma:multiway-join-merge-heights}
  Suppose we have a list of trees $T_1, \dots, T_{b + 1}$
  and keys $k_1, \dots, k_{b},$
  such that the height of each tree is either
  $h$ and the root contains at most $B$ keys,
  or $h + 1$ and the root contains at most $1$ key,
  except for $T_1$ which may contain $2$ keys.
  Then we can join the trees such that the node at level $h + 1$ contains at
  most $b + |H_i|$ keys,
  where $H_i$ is a subset of trees $T_1, \dots, T_i$ with height $h + 1.$
  Note that if a tree is of height $h,$ then the number of keys at the node
  of level $h + 1$ is defined to be $0.$
\end{lemma}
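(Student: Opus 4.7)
The plan is to prove this by induction on the index $i$, treating the fusion process implicit in \bjoinfast{} as a left-to-right merge in which we incorporate one tree $T_j$ and (for $j>1$) one separator $k_{j-1}$ at a time into a partial result of height exactly $h+1$. I maintain the invariant that after processing $T_1,\dots,T_i$, the node at level $h+1$ holds at most one key for each separator $k_1,\dots,k_{i-1}$ together with at most one key for each tall tree contributed so far, with one additional slot reserved in case $T_1$ itself is tall. This invariant, applied at $i=b+1$, yields the claimed $b+|H_{b+1}|$ bound (up to the constant absorbed by the $T_1$ exception).

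The base case $i=1$ is direct: if $T_1$ has height $h$, then no keys sit at level $h+1$ yet; if $T_1$ has height $h+1$, then its root already lives at level $h+1$ and carries at most $2$ keys by hypothesis, which is consistent with the invariant since $|H_1|=1$ plus the dedicated slot for the $T_1$ exception. For the inductive step, when fusing in $T_{i+1}$ together with separator $k_i$, I would split into two cases. If $T_{i+1}$ has height $h$, then it attaches as a new child at level $h$ under the growing root, so only $k_i$ is added to the level-$(h+1)$ node and $|H_{i+1}|=|H_i|$. If $T_{i+1}$ has height $h+1$, then its root (carrying at most $1$ key by hypothesis) is absorbed into the level-$(h+1)$ node along with $k_i$, adding at most two keys; since $|H_{i+1}|=|H_i|+1$, the invariant still holds.

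The main obstacle I expect is handling the $T_1$ exception, which introduces a constant offset in the bookkeeping that must be carried through the induction without accidentally double-counting when later tall trees contribute their at-most-one-key roots. A secondary subtlety is verifying that absorbing a tall tree's root into the level-$(h+1)$ node preserves the height of the partial result: the children of that absorbed root remain at level $h$, so no new level is created, and the bound on the node at level $h+1$ is exactly what governs whether a later rebalance (via \cref{lemma:multiway-join-rebalance-bounds}) can keep the output height at $h+1$ or forces it up to $h+2$. With these two points checked, the inductive argument reduces to tallying one contribution per separator plus one per tall tree, which is exactly the statement of the lemma.
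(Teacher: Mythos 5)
Your proposal matches the paper's proof essentially step for step: both argue by induction on $i$, merging trees left to right and splitting into cases according to whether the incoming tree $T_{i+1}$ has height $h$ or $h+1$, charging one key at level $h+1$ per separator and one per tall tree, with the same constant offset for the $T_1$ exception. The approach and the bookkeeping are the same as the paper's, so no further comparison is needed.
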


\begin{proof}
  We prove the lemma by induction for $i = 1, \dots, b + 1.$
  The base case trivially holds for $i = 1,$
  where if the height of the first tree is $h,$
  then the number of keys at the node of level $h + 1$ is $0.$
  If the height of the first tree is $h + 1,$
  then the number of keys at the node of level $h + 1$ is at most $2,$
  maintaining the stated bound.
  We assume the claim is true for $i,$
  which means the first $i$ trees produces B-tree $T^*$ with at most
  $i - 1 + |H_i|$ keys at the node of level $h + 1.$

  If $T^*$ is of height $h,$
  and the height of $T_{i + 1}$ is $h,$
  then the join will result in at most $1$ key at level $h + 1,$
  if $T^*$ and $T_{i + 1}$ contain $B$ keys each.
  In this case,
  $T^*$ will be made the left child,
  $T_{i + 1}$ will be made the right child,
  and $k_i$ will be the key at the root.

  If $T^*$ is of height $h + 1$ and $T_{i + 1}$ is of height $h,$
  then the join will cause $T_{i + 1}$ to be the right child of $T^*,$
  and $k_i$ will be the right most key of $T^*.$
  This increases the number of keys at the node of level $h + 1$ by $1,$
  and thus the bound is maintained.
  The same logic applies if $T^*$ is of height $h$ and $T_{i + 1}$ is of height
  $h + 1.$

  If instead both trees are of height $h + 1,$
  then the join will cause the trees to be merged at the same height,
  and the number of keys increases by $1$ plus the number of keys in $T_{i + 1}.$
  Recall in the statement of the lemma that every tree except for $T_1$ contains
  at most $1$ key at the root.
  Therefore,
  we can bound the increase by at most $2$ keys,
  and we can see the bound is maintained since $i$ increase by $1$
  and $|H_i|$ also increases by $1.$
\end{proof}

\begin{theorem} \label{theorem:multiway-join-height}
  Let $T_1, \dots, T_{b + 1}$ and $k_1, \dots, k_{b}$
  be the set of input trees and keys respectively,
  where $b \leq B.$
  If $h$ is the height of the tallest tree in $T,$
  then after performing \bjoin$(T, k),$
  the height of the output tree will be either $h$ or $h + 1.$
\end{theorem}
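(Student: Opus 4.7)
The plan is to prove Theorem \ref{theorem:multiway-join-height} by induction on $\hmax$, the height of the tallest input tree. In the base case $\hmax = 0$, all input trees are external nodes, and the concatenation step produces a single node with at most $b \leq B$ keys, which is already a valid B-tree of height $0$ or becomes height $1$ after \fname{SimpleRebalance}. So the claim holds trivially.

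For the inductive step, I would assume the theorem holds for all inputs whose tallest tree has height strictly less than $\hmax$. I then examine what each recursive call receives. By construction, every recursive call in Step 2 operates on a sequence consisting of (i) short trees $T_{i+1}, \dots, T_{i'-1}$, which by definition all have height strictly less than $\hmax$, together with (ii) one boundary subtree taken from a tall tree, which has height exactly $\hmax - 1$. Hence the tallest tree entering each recursive call has height at most $\hmax - 1$, and by the inductive hypothesis each recursive call returns a tree of height either $\hmax - 1$ or $\hmax$.

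Next I would analyze what happens when we substitute each recursive result back as a subtree of a tall tree. Before substitution, every subtree of a tall tree has height exactly $\hmax - 1$. After substitution, the replaced subtree has height at most $\hmax$, i.e.\ at most one more than its siblings, so each modified tall tree satisfies the hypothesis of \cref{lemma:multiway-join-rebalance-bounds} (possibly unbalanced by one level). In Step 3, the algorithm then concatenates all these modified tall trees together with the remaining separator keys $k_{i-1}$ for $i \in A$ into a single root structure $T$. By \cref{lemma:multiway-join-merge-heights}, the concatenated tree has a bounded number of keys at level $\hmax$, and by \cref{lemma:multiway-join-rebalance-bounds} the final \fname{SimpleRebalance} produces a valid B-tree of height either $\hmax$ or $\hmax + 1$, as required.

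The main obstacle I anticipate is the careful bookkeeping in the inductive step: one must verify that (a) each recursive input really has tallest tree of height at most $\hmax - 1$ (in particular that the boundary subtree contributed by a tall tree has height exactly $\hmax - 1$, which relies on the B-tree invariant that all children of the root of a tall tree share the same height), and (b) the cumulative effect of substituting up to $|A| \leq B$ taller subtrees into the concatenated structure $T$ still fits within the preconditions of \cref{lemma:multiway-join-rebalance-bounds,lemma:multiway-join-merge-heights}, so that only one level of overflow can occur. Once these invariants are verified, the height bound $\hmax$ or $\hmax + 1$ follows directly from the two supporting lemmas, and no further case analysis is needed.
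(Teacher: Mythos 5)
Your proposal follows essentially the same route as the paper's proof: induction on the height of the tallest tree, using \cref{lemma:multiway-join-rebalance-bounds} to absorb the one-level imbalance created by substituting the recursive results into the tall trees, and \cref{lemma:multiway-join-merge-heights} to bound the keys accumulated at level $h+1$ during concatenation. The one step you defer as bookkeeping---verifying that the concatenated root holds at most $b \leq B$ keys so that only a single division can occur---is precisely the counting argument ($b' + |H_{b'}| \leq b$) that the paper's proof supplies, so your plan is sound and matches the paper's.
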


\begin{proof}
  We prove the theorem by induction. In the base case,
  if all trees in $T_{1, \dots, b + 1}$ are of height $h,$
  joining them maintains the theorem trivially.
  We create a new root node $v$ containing all keys of $k,$
  and each tree in $T_{1, \dots, b + 1}$ is made a child of $v.$ 
  If any child node contains less than $B / 2$ keys,
  then we can use a standard rebalancing algorithm to fix it. 
  This may reduce the number of keys in $v$ to less than $b$,
  but it will never increase. 
  If all the keys in $v$ are removed,
  then the output tree will also be of height $h.$
  Otherwise, it will be $h + 1.$

  Then we prove the inductive step.
  Let $T^\prime_{0,1,\dots}$ be the set of trees which contain children that are the result
  of a recursive call, i.e. the tallest trees in $T_{1, \dots, b + 1}.$
  Due to the inductive hypothesis,
  all the trees will have a right most child of height either $h - 1$ or $h.$
  The first tree may also have a left most child of height $h.$
  Due to the possible discrepancy in heights,
  we must rebalance these trees.
  We know from Lemma \ref{lemma:multiway-join-rebalance-bounds},
  that we can rebalance these trees to have heights of $h$ or $h + 1.$
  If the height of a tree in $T^\prime$ is $h + 1,$
  then we know there is at most $1$ key at the root,
  except for $T^\prime_0$ which may have $2$ keys at the root.

  According to Lemma \ref{lemma:multiway-join-merge-heights},
  we can see that merging the trees in $T^\prime$ will result in a tree of
  height $h$ or $h + 1.$
  If the tree is of height $h,$
  then the proof is complete.
  Otherwise,
  we must ensure the root node has at most $B$ keys.
  We know the root node has at most
  $b^\prime + |H_{b^\prime}|$ keys,
  where $b^\prime = |T^\prime|.$
  We know that $b^\prime + |H_{b^\prime}| \leq b,$
  since in order for a tree to have height $h + 1,$
  it must have been passed into the recursive call with at least one other tree
  as part of the input set.
  Therefore,
  in order for $|H_{b^\prime}|$ to increase by $1,$
  the value of $b^\prime$ must have decreased by at least $1.$
  Therefore,
  the number of keys at the root is upper bounded by $b,$
  and thus cannot overflow again,
  maintaining the height of the tree $h + 1.$
\end{proof}

\multijoincost*

\begin{proof}
  We can see each separator key is assigned a thread, which is then responsible
  for traversing down a pair of input trees until it reaches the point where all
  trees are of the same height,
  which is a traversal of length of at most $O(h_{\max} - h_{\min}).$

  At each level of the tree,
  each thread will perform at most $O(1)$ I/Os.
  We know from \cref{theorem:multiway-join-height} at each level of the tree,
  the resulting subtree will always stay either the same height or increase by
  1.
  Which means no additional I/Os must be performed in order to compensate for
  the increased height of the tree,
  resulting in an I/O span of $O(h_{\max} - h_{\min}).$

  The I/O work comes readily from the I/O span.
\end{proof}

\section{Multiway Split: Cost Analysis}
\label{sec:multiway-split-cost}

\multiSplitCost*

\begin{proof}
  The algorithm spawns $d + 1$ threads,
  where each thread performs a search down the tree.
  Each thread will perform $O(\log_B n)$ I/Os,
  in order to perform the search and copying the keys and
  pointers to the children takes $O(1)$ I/Os per node.

  When performing the $\bjoin{}$ operation,
  recall that the I/O span of the $\bjoin{}$ operation is $O(h_{\max} - h_{\min}),$
  which is the difference in height between the tallest and shortest trees
  being joined together,
  and that the output tree must be of height at least $h_{\max}.$

  We will be performing \bjoin{} along the path from the first divergence to
  each key.
  This means if at least one key is within the range of the split,
  then every \bjoin{} occurs with trees with height difference at most $O(1),$
  and satisfying the I/O span bound of $O(\log_B n).$

  For the cases where a node contains keys within the range,
  but some of their ancestor's does not,
  we may use an amortized argument to charge a constant I/O for each node which
  does not contain keys within the range,
  which would otherwise not incur any cost.
  Then when joining with trees of height difference larger than a constant,
  we use the charge to pay for the extra I/O.
  The height difference can only be at most the number of consecutive skipped
  \bjoin{} operations.

  For the I/O work bounds,
  observe that for each \bjoin{} operation,
  at most two trees being joined are of different heights,
  which are the trees along the search path to the split key.
  Therefore,
  the I/O work bounds is $O(1)$ instead of $O(B)$ for each \bjoin{} operation,
  which gives us the I/O work bound of $O(d \log_B n)$ I/Os.
\end{proof}

\section{Multiway Join Cost Analysis}
\label{sec:multi-join-cost-analysis}
\begin{lemma}
  \label{lem:preprocess-cost}
  The cost of preprocessing $d$ B-trees such that each tree $T$ contains a list
  of pointers to all nodes along their left and right spines is
  $O(\log_B n + \log_B d)$ I/O span and $O(d \log_B n)$ I/O work,
  where $n$ is the total number of keys in all trees.
\end{lemma}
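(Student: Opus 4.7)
The plan is to preprocess each of the $d$ input trees independently in parallel, and bound the cost by combining a per-tree cost with the overhead of forking $d$ threads. For a single tree $T_i$ of height $h_i = O(\log_B n)$, I would first walk down the left spine by following the leftmost child pointer at each level, pushing the visited node pointers into an in-memory buffer; symmetrically for the right spine. Each step reads one new block and depends on the pointer fetched in the previous step, so the walk costs $\Theta(h_i) = O(\log_B n)$ I/O work and the same I/O span, and the pointers are emitted in order of height, which is exactly the order required by the auxiliary structure.

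Next I would build, for each spine, the sequential B-tree described in \cref{sec:maintain-pointers}, whose keys are the collected pointers keyed by height and which supports order-statistics queries in $O(\log_B \log_B n)$ I/Os. Because the input list has only $O(\log_B n)$ already-sorted entries, a standard bottom-up bulk-load (pack $B$ pointers per leaf, then $B$ children per internal node, doubling the covered range each level) produces a B-tree of height $O(\log_B \log_B n)$ in $O(\log_B \log_B n)$ I/O span and $O(\log_B n)$ I/O work. Both of these are dominated by the spine walk, so the total per-tree cost is $O(\log_B n)$ I/O span and $O(\log_B n)$ I/O work.

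To handle all $d$ trees simultaneously, I would use a fork tree to spawn one thread per input tree. Under the \ourmodel{} convention that a single fork/join can branch into $B$ threads, reaching $d$ threads requires a fork tree of depth $O(\log_B d)$, which contributes an additive $O(\log_B d)$ term to the I/O span, and the final join contributes the same. Adding the $O(\log_B n)$ per-tree span gives a total I/O span of $O(\log_B n + \log_B d)$, and since the $d$ threads touch disjoint trees, the total I/O work is simply $d$ times the per-tree cost, i.e. $O(d \log_B n)$, matching the lemma.

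The only subtlety I expect is confirming that the bulk-load of the auxiliary sequential B-tree stays within the spine-walk bounds and does not inflate the work per tree; once that routine construction is in place, there is no further obstacle, since the spine walks are inherently sequential traversals and the $d$ trees are independent, making the fork-tree overhead the only coupling term in the span.
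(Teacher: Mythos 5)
Your proposal is correct and matches the paper's intended argument: the paper actually states \cref{lem:preprocess-cost} without an explicit proof, but the preprocessing loop in \cref{algo:mjoin} is exactly your construction --- a parallel loop over the $d$ trees, each thread walking its tree's left and right spines in $O(\log_B n)$ I/Os and storing the pointers in the auxiliary sequential B-trees of \cref{sec:maintain-pointers}. Your accounting of the $O(\log_B d)$ fork-tree overhead and the dominated cost of bulk-loading the pointer lists supplies the details the paper omits; the only cosmetic difference is that the paper's pseudocode walks the spines bottom-up from the leaves via parent pointers rather than top-down from the root.
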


We can see that the fuse operations are completely independent of each other,
and since each thread is able to access the number of keys that will be copied
to the same tree ahead of them,
then this process is able to occur in $O(1)$ I/O span and $O(B)$ I/O work,
which gives us the following lemma.

\begin{lemma}
  \label{lem:multiway-join-fuse-cost}
  The cost of fusing $B$ trees in the \bjoinfast{} operation is
  $O(1)$ I/O span and $O(B)$ I/O work.
\end{lemma}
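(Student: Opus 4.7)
The plan is to analyze span and work separately, leveraging the preprocessed spine pointer lists and the \threadgather{} primitive. First, I would observe that each tree $T_i$ in the group has (by the hypothesis on the input to \bjoinfast{}) a precomputed list $L_i$ and $R_i$ of spine pointers, so locating the node at the correct height for each fuse is not a tree traversal but an order-statistics lookup. Since these lookups are what would dominate cost if done naively, the main work is to show that the conflict resolution across trees of equal height, illustrated in \cref{fig:bfast-conflict}, can be carried out without any sequential dependency longer than a constant.

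Next, I would argue the span bound. Spawn one thread per tree $T_j$ in the group, and let each thread read the heights of all trees in the group via a single \threadgather{}; this collects $O(B)$ height values into one block in $O(1)$ I/O span and $O(B)$ I/O work. With all heights visible in internal memory, each thread locally (without further I/O) computes its target tree $T_i$ (the nearest preceding tree of height at least $h(T_j)$) and, if multiple trees of equal height map to the same node $v$, computes its write offset into $v$ by summing the root sizes of the equal-height trees indexed before it. All such offset computations touch only the in-memory height array, so they contribute $O(1)$ to the span. Each thread then performs a single constant-size block transfer to write its keys and child pointer into the designated slot of $v$; since the offsets provably do not overlap, these writes proceed in parallel with no serialization.

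For the work bound, I would simply sum the per-thread costs: the initial \threadgather{} is $O(B)$, each of the $O(B)$ threads does $O(1)$ in-memory computation and $O(1)$ block transfers, and the special leading-group case (where $T_1$ is not tallest) contributes only one additional \joinwithtall{} setup step that likewise fits into $O(B)$ work. Summing gives the claimed $O(B)$ I/O work and $O(1)$ I/O span.

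The main obstacle I expect is justifying rigorously that the conflict-resolution step truly runs in constant span: in particular, ruling out any chain of the form ``thread for $T_3$ waits for thread for $T_2$ waits for thread for $T_1$.'' My plan is to finesse this by noting that, because all heights are visible in one cache line after \threadgather{}, every thread independently resolves to the \emph{ultimate} target node $v$ (not to an intermediate equal-height root), and the write offset into $v$ is a deterministic prefix-sum over root-sizes that each thread can compute locally without waiting. This turns a seemingly sequential fuse chain into a fully parallel set of disjoint writes, which is exactly what yields the $O(1)$ span.
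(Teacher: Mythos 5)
Your proposal is correct and follows essentially the same route as the paper: the paper likewise argues that the fuse operations are independent because each thread, after a \threadgather{} of the group's heights (and root sizes), locally resolves to the ultimate target node and computes its write offset from the keys contributed by preceding equal-height trees, so all writes are disjoint and parallel. Your write-up is in fact more explicit than the paper's one-sentence justification, particularly in ruling out the sequential fuse chain of \cref{fig:bfast-conflict}.
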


Next we show the following lemma,
so that we can guarantee that after the first round of divisions,
no overloaded node will push more than a single key up to their parent node.

\begin{lemma}
  \label{lem:multiway-join-divide-up-bound}
  After fusing $B$ trees together in the \bjoinfast{} operation,
  we can guarantee after one round of dividing the overloaded nodes,
  that each subsequent overloaded node may push up at most one key to their
  parent node.
\end{lemma}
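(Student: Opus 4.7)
The plan is to show by induction on the height that, after the first round of dividing in step 3 of \bjoinfast{}, any node that becomes overloaded in a subsequent round holds at most $2B$ keys and therefore divides into exactly two pieces, promoting exactly one key to its parent.

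First I bound the size of each fusion target immediately after step 2. A spine node $v$ starts with at most $B$ keys, and the conflict-resolution scheme illustrated in \cref{fig:bfast-conflict} arranges for each of the $k \le B$ trees chained into $v$ to contribute one separator key together with one new child subtree; the ``fusion with $T_1$'s root'' in the chain is merely a device used to assign disjoint write locations inside $v$, not an absorption of the root keys of the chained trees. Hence $v$ receives at most $k$ keys during step 2, so it holds at most $B + k \le 2B$ keys. Its round-1 division produces $\lceil 2B / B \rceil = 2$ pieces and promotes exactly one key to its parent.

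Next I analyze the state of a parent $p$ after receiving its promoted keys. The crucial structural observation is that on any single left or right spine of a tall tree, the chain of fusion targets is a simple path, so $p$ has at most one round-1-overloaded child---or, at the root of a tall tree where the two spines meet, at most two. Consequently $p$ accrues only a constant number of promoted keys. Combining this with $p$'s residual key count (at most $B$, either because $p$ was also a fusion target and was split down to size $\le B$ during round~1, or because $p$ was never overloaded to begin with), $p$ holds at most $B + O(1) \le 2B$ keys. By the same division bound, $p$ itself produces exactly two pieces and promotes one key upward.

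The induction now propagates upward along the spine: every subsequently overloaded node has at most $2B$ keys and therefore contributes exactly one promoted key. The main obstacle will be justifying that each chained fusion deposits only a single separator key at the final fusion target, rather than bulk-absorbing the root keys of the chained trees (which would inflate $v$ to $\Theta(B^2)$ keys and break the argument). Once this invariant of step~2 is pinned down, the rest of the induction is immediate and yields the single-key-per-overloaded-node property that step~4 of \bjoinfast{} relies on for its prefix-sum-based parallel division.
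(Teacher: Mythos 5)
There is a genuine gap, and it is exactly the one you flag at the end as ``the main obstacle'': your bound of $B + k \le 2B$ keys on a fusion target after Step~2 rests on the premise that each chained tree deposits only its separator key into the target, with its root hung below as a new child subtree. That premise contradicts the algorithm. A tree $T_j$ is fused with the spine node $v = R_i[h(T_j)]$, i.e., $v$ and the root of $T_j$ sit at the \emph{same} level, so $T_j$ cannot become a child of $v$ without violating the uniform-depth invariant; the pseudocode accordingly copies \emph{all} keys at the root of $T_j$ (up to $B$ of them) plus the separator into $v$, and the write offset $|v| + |T_1|$ described in \cref{fig:bfast-conflict} is precisely the count of $T_1$'s root keys already deposited there. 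With $k \le B$ trees chained into one target, $v$ can therefore hold $B + k(B+1) = \Theta(B^2)$ keys. Consequently your round-1 conclusion (division into exactly two pieces, exactly one promoted key) is false, and the base of your induction collapses.

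The paper's proof avoids this by bounding promoted keys rather than resident keys: dividing a node with $b$ keys promotes only $\lceil b/B\rceil - 1$ keys, so a target holding $B + k(B+1)$ keys promotes roughly one key per fused tree, and since at most $B$ trees are fused in total, at most about $B$ keys are promoted in the first round across the entire structure. Even in the worst case where all of them land in a single parent, that parent holds at most $2B$ keys and hence promotes at most one key, and this single observation propagates up the spine with no further case analysis. Your second-paragraph observation that the fusion targets along a spine form a simple path (so each parent has few overloaded children) is correct and would only sharpen constants, but it cannot rescue the argument without first replacing the false $2B$ bound on the targets themselves with a bound on what their divisions emit.
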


\begin{proof}
  Recall that we are joining at most $B$ trees together.
  Since the root and all nodes it may fuse with contains at most $B$ keys,
  then including the paired key which the tree is associated with,
  each tree contributes at most $1$ key to be pushed up to their parent node.
  In other words,
  for every tree that a node fuses with,
  they contribute at most $1$ key to push to the parent per node.
  
  Therefore,
  after a single round of divisions,
  even in the worst case where all keys were pushed up to a single node,
  this node would contain at most $2B$ keys,
  which would only push up a single key to the parent.
\end{proof}

\begin{lemma}
  \label{lem:multiway-join-rebalance-bounds}
  The cost of dividing the nodes up the tree in the \bjoinfast{} operation is
  $O(\log_2 \log_B n)$ I/O span and
  $O(B\log_B n)$ I/O work.
\end{lemma}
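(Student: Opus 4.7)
The plan is to charge the cost to three tasks: the initial round of divisions at the overloaded nodes produced by fusing (Step~3), the prefix-sum computation that identifies which higher-level nodes must also be divided (Step~4), and the actual parallel division of the nodes marked by those prefix sums.

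For Step~3, observe that there are at most $B$ initially overloaded nodes, since \bjoinfast{} joins at most $B$ trees and each fusing interaction creates at most one. Each such node is split locally in constant I/O, and \threadgather{} aggregates the single key each one pushes to its parent in $O(1)$ I/O span and $O(B)$ I/O work total.

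For Step~4, for each overloaded node $v$ one constructs the indicator array $C_v$ of length $O(\log_B n)$ along the path from $v$ up toward the root, using the spine-pointer B-trees maintained in \cref{sec:maintain-pointers} to random-access each ancestor in $O(\log_B \log_B n)$ I/O. Building all $B$ arrays in parallel costs $O(\log_B \log_B n)$ span and $O(B \log_B n)$ work. A parallel scan (see \cref{sec:prefix-sum}) on a length-$\log_B n$ array runs in $O(\log_2 \log_B n)$ span and $O(\log_B n)$ work, and running all $B$ scans concurrently keeps the span the same while yielding $O(B \log_B n)$ total work.

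Finally, the prefix-sum entries whose values equal their indices mark the ancestors that must themselves be divided. By \cref{lem:multiway-join-divide-up-bound}, each overloaded node pushes at most one key to its parent, so any marked ancestor accumulates at most $B$ keys from its children; \threadgather{} can aggregate these and perform the local split in $O(1)$ span and $O(B)$ work, and all marked nodes are handled in parallel. Summing the three contributions yields $O(\log_2 \log_B n)$ I/O span and $O(B \log_B n)$ I/O work, as required. The main obstacle is ruling out a second overflow cascade, and this is exactly what \cref{lem:multiway-join-divide-up-bound} provides: the one-key-per-divided-node invariant guarantees that once the top of each consecutive full-node chain is identified, a single parallel round of divisions resolves every overload, rather than igniting another round of depth $\log_B n$.
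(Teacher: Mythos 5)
Your proposal is correct and follows essentially the same route as the paper's proof: a constant-cost first round of divisions at the at most $B$ overloaded nodes, construction of the length-$O(\log_B n)$ indicator arrays via the spine-pointer lists in $O(\log_B \log_B n)$ span and $O(B\log_B n)$ work, the $B$ parallel prefix sums in $O(\log_2\log_B n)$ span, and a final constant-span parallel round of divisions at the marked nodes. Your explicit appeal to \cref{lem:multiway-join-divide-up-bound} to rule out a second overflow cascade makes precise a point the paper leaves implicit, but the decomposition and the bounds are the same.
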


\begin{proof}
  The first set of divide operations cost $O(1)$ I/O span and $O(B)$ I/O work,
  since at most $B$ new nodes are created,
  which can all be done in parallel.

  Then we gather the number of keys in each node along the path from the root
  to every overloaded node.
  This takes $O(\log_B \log_B n)$ I/O span and $O(B \log_B n)$ I/O work,
  as we can use the list of pointers to jump directly to their assigned node
  along the path,
  since the path must consist of only nodes which were along the spines of a
  fused tree.
  There are at most $B$ overloaded nodes,
  which means we create $B$ arrays for the paths to the root.

  We then perform the prefix sum operation along each of these arrays,
  which takes $O(\log_2 \log_B n)$ I/O span and $O(B \log_B n)$ I/O work.
  Finally,
  using the information provided by the prefix sum,
  each node can be divided in a $O(1)$ I/O for each thread,
  which gives us our desired results.
\end{proof}

\begin{lemma}
  \label{lem:multiway-join-update-pointers}
  The cost of updating the list of pointers in the \bjoinfast{} operation is
  $O(\log_B \log_B n)$ I/O span and $O(B \log_B \log_B n)$ I/O work.
\end{lemma}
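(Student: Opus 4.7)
The plan is to decompose the update into three sub-operations and bound each using the key fact that every spine pointer list is itself a sequential B-tree storing at most $O(\log_B n)$ entries (one per level of a spine), so that any B-tree operation on a single pointer list costs $O(\log_B \log_B n)$ I/Os.

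First, I would account for identifying, for each tree $T \in S$ (the group of fused trees with at least one spine node surviving on the output tree), the height at which its spine pointer list must be trimmed, namely the height of the next smaller tree in $S$. Using the \threadgather{} primitive from the model section, the heights of the at most $B$ trees in $S$ can be collected into a single block in $O(1)$ I/O span and $O(B)$ I/O work, and each thread then determines its own target height locally, contributing nothing to the dominant term.

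Second, I would analyze the sequential split of each spine pointer list at its computed height. Since each pointer B-tree stores $O(\log_B n)$ entries and therefore has height $O(\log_B \log_B n)$, a single split costs $O(\log_B \log_B n)$ I/Os. Performing the at most $B$ splits in parallel, one thread per pointer list, yields $O(\log_B \log_B n)$ I/O span and $O(B \log_B \log_B n)$ I/O work. Third, I would invoke \bjoin{} on the at most $B$ trimmed pointer B-trees to assemble the new left (respectively right) spine list of the output tree. By the B-Way Join analysis proved earlier, \bjoin{} costs $O(h_{\max}-h_{\min})$ I/O span and $O(B(h_{\max}-h_{\min}))$ I/O work; since each pointer B-tree has height $O(\log_B \log_B n)$, this phase contributes $O(\log_B \log_B n)$ I/O span and $O(B \log_B \log_B n)$ I/O work.

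The main obstacle I expect is verifying that the preconditions of \bjoin{} are met at the level of the pointer B-trees, namely that after trimming, the residual pointer lists are themselves valid B-trees whose ordering along the spine of the output matches the ordering of the fused subtrees, and that the separator keys between consecutive pointer B-trees can be recovered from the bookkeeping of the outer \bjoinfast{} at no additional asymptotic cost. Once this is justified, repeating the three phases once for each of the left and right spines and summing the costs gives the claimed $O(\log_B \log_B n)$ I/O span and $O(B \log_B \log_B n)$ I/O work.
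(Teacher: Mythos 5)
Your proposal is correct and follows essentially the same route as the paper's proof: each spine pointer list is a sequential B-tree over $O(\log_B n)$ entries, so splitting each of the at most $B$ lists costs $O(\log_B \log_B n)$ I/O span and $O(B \log_B \log_B n)$ I/O work in parallel, after which a single \bjoin{} over these small trees contributes the same bounds. The paper's proof is terser (it omits the explicit \threadgather{} step for computing trim heights and the precondition check for \bjoin{}, both of which you reasonably flag), but the decomposition and the cost accounting are identical.
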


\begin{proof}
  Each list of pointers is a standard sequential B-tree.
  Therefore,
  the split operation takes $O(\log_B \log_B n)$ I/O span and work for each
  list.
  There are $B$ lists of pointers,
  which gives us $O(B \log_B \log_B n)$ I/O work.

  Then by using the \bjoin{} operation from Section \ref{sec:b-way-join},
  we can join these trees in $O(\log_B \log_B n)$ I/O span and
  $O(B \log_B \log_B n)$ I/O work.
\end{proof}

Now we can prove the main theorem of this section.

\begin{theorem}
  \label{thm:multiway-join-cost}
  We can join $T_1, \dots, T_d$ B-trees and $k_1, \dots, k_d$ keys together in
  parallel with $O(\log_B d \cdot \log_2 \log_B n + \log_B n)$ I/O span and
  $O(d \log_B n)$ I/O work, where $n$ is $\sum_{i = 1}^d |T_i|,$
  in the \ourmodel{}.
\end{theorem}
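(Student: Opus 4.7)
The plan is to prove \cref{thm:multiway-join-cost} by combining the \bjoinfast{} primitive from \cref{sec:fast-b-way-join} with a tournament-style reduction over $\log_B d$ rounds. First, I would preprocess the $d$ input trees so that each one carries the left/right spine pointer lists required as input by \bjoinfast{}; by \cref{lem:preprocess-cost} this costs $O(\log_B n + \log_B d)$ I/O span and $O(d \log_B n)$ I/O work, and these bounds are already within what the theorem claims.

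Next, I would proceed through $\lceil \log_B d \rceil$ rounds of joining. In round $r$, the trees currently alive are partitioned into contiguous groups of (at most) $B$ consecutive trees together with their $B-1$ separator keys, and a single \bjoinfast{} call is spawned per group, all executed in parallel. After round $r$ the number of alive trees shrinks by a factor of $B$, so after $O(\log_B d)$ rounds only one tree remains, which is the output. The correctness of this reduction follows because \cref{theorem:multiway-join-height} ensures that each \bjoinfast{} call produces a valid B-tree whose height exceeds the maximum input height by at most one; since the total number of keys never exceeds $n$, all intermediate trees stay within height $O(\log_B n)$, so the per-call bounds of the subordinate lemmas continue to apply in later rounds.

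For the span, combining \cref{lem:multiway-join-fuse-cost}, \cref{lem:multiway-join-rebalance-bounds}, and \cref{lem:multiway-join-update-pointers} shows that a single \bjoinfast{} call costs $O(\log_2 \log_B n)$ I/O span (the rebalancing dominates). All groups in a round are executed in parallel, so the round costs $O(\log_2 \log_B n)$. Over $O(\log_B d)$ rounds plus the preprocessing term this yields the desired $O(\log_B d \cdot \log_2 \log_B n + \log_B n)$ span. For the work, a single \bjoinfast{} call costs $O(B \log_B n)$ work (again dominated by \cref{lem:multiway-join-rebalance-bounds}), and round $r$ spawns at most $d / B^{r-1}$ such calls, so the total work per round is $O((d/B^{r-1}) \log_B n)$. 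Summing the geometric series over $r \geq 1$ gives $O(d \log_B n)$, matching the preprocessing work and yielding the overall $O(d \log_B n)$ bound.

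The main subtlety I expect to deal with is the maintenance of the spine pointer lists across rounds: I need to argue that the output of each \bjoinfast{} call again satisfies the input preconditions of \bjoinfast{} (a properly maintained pointer list on the left and right spines) so that the next round's analysis goes through without extra charge. \cref{lem:multiway-join-update-pointers} handles this within one call, but I would explicitly verify that its amortized cost fits inside the per-round span and does not break the geometric series in the work analysis. A secondary care point is boundary handling when the final round has fewer than $B$ trees; this is just an invocation of \bjoinfast{} with $b < B$, and the same lemmas apply.
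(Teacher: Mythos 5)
Your proposal is correct and follows essentially the same route as the paper's proof: preprocess the spine pointer lists via \cref{lem:preprocess-cost}, run $O(\log_B d)$ parallel rounds of \bjoinfast{} whose per-call span $O(\log_2\log_B n)$ and work $O(B\log_B n)$ come from the subordinate lemmas, and sum the geometric series for the work. You are in fact somewhat more explicit than the paper about maintaining the \bjoinfast{} preconditions across rounds, which the paper leaves implicit.
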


\begin{proof}
  We get from Lemma \ref{lem:preprocess-cost} that the cost of preprocessing
  is $O(\log_B n + \log_B d)$ I/O span and $O(d \log_B n)$ I/O work.

  The \mjoin{} operation performs $O(\log_B d)$ rounds of \bjoinfast{}
  operations.
  This and preprocessing gives us the I/O span of
  $O(\log_B d \cdot \log_2 \log_B n + \log_B n).$

  To analyze the I/O work cost,
  we can see that the total number of \bjoinfast{} operations reduces by a
  factor of $B$ each round.
  Therefore,
  we calculate the total I/O work cost from the following equation
  $$
    \sum_{i = 1}^{\log_B d} B^i \log_B n = O(d \log_B n).\\
  $$
\end{proof}

\section{Prefix Sum}
\label{sec:prefix-sum}
The prefix sums problem is a fundamental aspect of parallel computing.
The applications for this problem extends to a wide variety of domains,
including for the Parallel Joinable B-Tree.

\begin{definition}
  Given an ordered set $A$ of $n$ elements,
  the prefix sums operation returns an ordered set $A^\prime$ of $n$ elements,
  such that $B[i] = \sum_{j = 0}^i A[j]$ for all $0 \leq i < n.$
\end{definition}

We show that if the input set $A$ is located in contiguous main memory,
then the prefix sums problem can be solved in the \ourmodel{} in $O(\log_2 n)$
I/O span and $O(n)$ I/O work.
By using the \threadgather{} operation,
$n$ non-contiguous elements can be gathered into a single contiguous array in
$O(\log_B n)$ I/O span and $O(n)$ I/O work,
thus the above result also holds for non-contiguous input sets.

\begin{lemma}
  The prefix sum problem can be solved in the \ourmodel{} in $O(\log_2 n)$ I/O
  I/O span and $O(n)$ I/O work.
\end{lemma}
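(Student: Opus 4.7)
The plan is to adapt the classical Blelloch up-sweep/down-sweep parallel prefix sum to the \ourmodel{}. I would conceptually build a balanced binary tree over the $n$ input positions, with $O(n)$ nodes and depth $\lceil \log_2 n \rceil$, storing its auxiliary partial-sum cells in an additional contiguous buffer of length $O(n)$ laid out alongside the input.

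The up-sweep phase then proceeds as follows: at each internal node, fork two threads that recursively compute the sums of its left and right subtrees and, after the join, store their sum in the node's auxiliary cell. The down-sweep phase is symmetric: starting at the root with an accumulated prefix of $0$, each internal node passes the current prefix to its left child and the current prefix plus the left child's stored sum to its right child, again via a parallel fork. When both sweeps complete, each leaf holds its final prefix-sum value. Correctness is the standard Blelloch invariant and requires no new argument.

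To translate the standard $O(n)$ work / $O(\log_2 n)$ span into \ourmodel{} costs, I would invoke the assumption that each processor may keep $O(\log n)$ blocks in its stack memory for free: each node performs $O(1)$ word accesses on memory cells that reside in $O(1)$ distinct blocks, which can be loaded once per call, contributing $O(1)$ I/Os. Summing over the tree gives $O(n)$ I/O work and $O(\log_2 n)$ I/O span. The one place requiring attention is the I/O bookkeeping for the fork/join synchronization at each level, since in the \bfmodel{} a join may need to read back data produced by a sibling thread; because only $O(1)$ words cross each split and the auxiliary buffer is laid out contiguously with the input, this cost is absorbed into the per-node $O(1)$ accounting and does not affect the stated bounds. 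No other step poses a genuine obstacle.
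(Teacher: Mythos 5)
Your proposal is correct and follows essentially the same route as the paper: both use the standard two-pass (up-sweep/down-sweep) tree-based prefix sum, charging $O(1)$ I/Os per tree node to obtain $O(n)$ I/O work and $O(\log_2 n)$ I/O span. Your additional remarks on block layout and fork/join bookkeeping only elaborate on details the paper treats implicitly.
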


\begin{proof}
The \ourmodel{} solution to the prefix sums problem is performed by simulating
the optimal I/O work efficient PRAM algorithm.
The algorithm performs two passes: building a tree bottom-up and then
traversing the tree top-down.
Each element in $A$ is a leaf node,
ordered from left to right,
and we combine the elements in pairs to form a binary tree,
where each node is the sum of its two children.
Naturally this costs $O(\log_2 n)$ I/O span and $O(n)$ I/O work.
Then from top down the tree,
we pass the sum of the left child to the right child,
summing up the left child's values until we reach the leaf,
which then sums up the values of the left children passed.
This top-down traversal also costs $O(\log_2 n)$ I/O span and $O(n)$ I/O work.

The tree can be implemented as an array,
which would take advantage of the locality of the data.
However,
for the sake of simplicity,
we will not use this optimization in our implementation as it does not affect
the bounds of our algorithm.
\end{proof}

\section{Union Algorithm: Base Case Cost}
\label{sec:union-base-case-cost}

\begin{lemma}
  The \union{} algorithm takes $O(\log_B m \cdot \log_2 \log_B n + \log_B n)$
  I/O span and $O(m \log_B n)$ I/O work.
\end{lemma}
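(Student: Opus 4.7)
The plan is to reduce the cost of \union{} directly to the costs already established for \msplit{} (\cref{thm:multi-split-cost}) and \mjoin{} (\cref{thm:multiway-join-cost}). Recall that the base case assumes $m < \sqrt{n+m}$ and $|T_1| \geq |T_2|$; the algorithm first flattens the keys of the smaller tree $T_2$ into a sorted list $k_{1..m}$, then calls $\msplit(T_1, k_{1..m})$ to obtain $m+1$ subtrees, and finally feeds these subtrees together with the splitter keys into $\mjoin$.

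First I would bound the cost of producing the sorted splitter list from $T_2$. Since $T_2$ is a valid B-tree of size $m$, its height is $O(\log_B m)$, and an in-order traversal that spawns a thread at every subtree can write the keys into a contiguous buffer (using \threadgather{} at each node) with $O(\log_B m)$ I/O span and $O(m/B)$ I/O work. This is dominated by the later steps, so it contributes nothing to the stated bounds.

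Next I would invoke \cref{thm:multi-split-cost} to bound the \msplit{} step on $T_1$ with $d = m$ splitter keys. That gives $O(\log_B m + \log_B n)$ I/O span and $O(m \log_B n)$ I/O work. Then I would invoke \cref{thm:multiway-join-cost} on the returned $m+1$ subtrees and the $m$ keys to bound the \mjoin{} step at $O(\log_B m \cdot \log_2 \log_B n + \log_B n)$ I/O span and $O(m \log_B n)$ I/O work. Summing the two rounds gives the claimed I/O span of $O(\log_B m \cdot \log_2 \log_B n + \log_B n)$ and I/O work of $O(m \log_B n)$.

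The main subtlety, and the only place the argument is not just a citation, is the first step: showing that the $m$ keys of $T_2$ can be emitted as a contiguous sorted array within the stated budget, since \msplit{} expects its splitter keys to be accessible without itself incurring $\omega(\log_B n)$ I/O span. This is straightforward via a parallel in-order traversal of $T_2$, but it deserves an explicit sentence so that the reader sees why this preparatory work is absorbed by the split/join costs rather than introducing a spurious $\log_B m$ factor along the critical path.
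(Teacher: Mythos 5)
Your proposal is correct and takes essentially the same route as the paper: the paper's proof simply observes that the cost is dominated by the \mjoin{} call with $d=m$ and cites \cref{thm:multiway-join-cost}, while you additionally spell out the matching \msplit{} bound from \cref{thm:multi-split-cost} and the preparatory step of materializing the $m$ splitter keys from $T_2$. That extra step is a reasonable detail the paper glosses over, and your accounting for it ($O(\log_B m)$ I/O span, $O(m/B)$ I/O work) is consistent with the claimed bounds.
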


\begin{proof}
  The cost of the \union{} algorithm is dominated by the \mjoin{} operation,
  which comes readily from \cref{thm:multiway-join-cost},
  where $d = m.$
  This gives us the I/O span $O(\log_B m \cdot \log_2 \log_B n + \log_B n)$
  and I/O work $O(m \log_B n).$
\end{proof}

\unionBaseCaseCost*

\begin{proof}
  Suppose we have $k$ base cases,
  then the total cost of all base cases in the union algorithm is
  $\sum_{i = 1}^k O(m_i (\log_B n_i + B)),$
  where $m_i$ and $n_i$ are the number of keys in the $i$-th base case.
  Recall that for each base case,
  $m_i < \sqrt{n_i + m_i}.$

  From Lemma B.2 from Blelloch et al. \cite{blelloch2019optimal},
  we know that $\sum_{i = 1}^k m_i \log n_i = O(m \log n),$
  where $\sum_{i = 1}^k m_i = m, \sum_{i = 1}^k n_i = n,$ and for all $i,$
  $m_i \leq \sqrt{n_i + m_i}.$
  In the proof,
  the base of the logarithm is independent of the result,
  thus we can use the same proof to show that the I/O work is
  $O(m \log_B (\frac{n}{m})).$
\end{proof}

\section{Additional Pseudocode for Multiway Split and Join}
We present the pseudocodes in \cref{algo:msplit,algo:mjoin,algo:dividenode,algo:bjoinfast}.

\begin{algorithm}[H]
\small
\caption{$\msplit{}(T, k_{1..d})$\label{algo:msplit}}
\KwIn{A B-trees $T$ and a sorted list of $d$ keys $k_{1..d}$}
\KwOut{A B-tree $T$ with all keys in $\{k_1, k_2, \dots, k_d\} \cup \bigcup T_i$}
\parForEach{$1 \leq i \leq d + 1$}{
  $T_i \gets \tsplit{}(T, k_{i - 1}, k_i)$
}
\Return{$T_{1..d + 1}$}
\end{algorithm}

\begin{algorithm}[H]
\small
\caption{$\mjoin{}(T_{1..d+1},k_{1..d})$\label{algo:mjoin}}
\KwIn{A sequence of $d+1$ B-trees $T_{1..d+1}$ and $d$ keys $k_{1..d}$.
For any $x\in T_{i}$ and $y\in T_{i+1}$, we have $x<k_i<y$.}
\KwOut{A B-tree $T$ with all keys in $\{k_1, k_2, \dots, k_d\} \cup \bigcup T_i$}
\parForEach{$1 \leq i \leq d + 1$}{
  $\ell, r \gets$ the left most and right most leaf node of $T_i$\\
  $L_i, R_i \gets$ empty list\\
  \While{$\ell$ and $r$ are not null}{
    Add $\ell$ to $L_i$ and $r$ to $R_i$\\
    $\ell, r \gets$ parent of $\ell$ and $r$\\
  }
}

\ForEach{$1 \leq \ell \leq \lceil \log_B d \rceil$}{
  \parForEach{$0 \leq i < \frac{d}{B^\ell}$}{
    $T_i \gets \bjoinfast{}(T_{Bi + 1, \dots, Bi + B}, k_{Bi + 1, \dots, Bi + B - 1})$\\
  }
}
\Return{$T_1$}
\end{algorithm}

\begin{algorithm}[H]
\small
\caption{$\joinwithtall{}(T_{1..b+1}, k_{1..b})$\label{algo:joinwithtall}}
\DontPrintSemicolon
\KwIn{
  A sequence of $b+1$ B-trees $T_{1..b+1}$ and $b$ keys $k_{1..b}$.\\
  For any $x\in T_{i}$ and $y\in T_{i+1}$, we have $x<k_i<y$.\\
  For all $T_i,$ there is a list $L_i$ and $R_i$ of pointers to all nodes along $T_i$'s left and right spines.\\
}
\KwOut{A B-tree $T$ with all keys in $\{k_1, k_2, \dots, k_b\} \cup \bigcup T_i$}
\tcp{Step 2: Fusing Within Groups}
\parForEach{$2 \leq i \leq b + 1$}{
  $j \gets$ the largest index such that $h(T_j) \geq h(T_i)$ and $j < i$\tcp*[f]{$T_i$ will join with $T_j$}\\ 
  $\treejoinset{} \gets $ all trees $T_x$ which join with $T_j$ and $x < i$\\
  $n \gets \sum_{T \in X}$ number of keys at the root of $T$\\
  $v_i \gets$ the root of $R_{j}[h(T_i)]$\\
  Copy values at root of $T_i$ and $k_{i - 1}$ to $v$ offset by $n$\\
}

\tcp{Step 3: Dividing Large Nodes}
\parForEach{$1 \leq i \leq b + 1$}{
  \If{$v_i$ is overloaded}{
    $k^\prime_{i, [1, \dots, b]} \gets \dividenode{}(v_i)$\\
  }
}

$X \gets \{\}$\tcp*[f]{Set of nodes which must still be divided}\\
\parForEach{$1 \leq i \leq b + 1$}{
  \If{$k^\prime_{i, [1, \dots, b]}$ exists}{
    $p_i \gets$ the parent of $v_i$\\
    Add $k^\prime_{i, [1, \dots, b]}$ to $p_i$\\
    $X \cup p_i$ if $p_i$ is overloaded\\
  }
}

\tcp{Step 4: Dividing Smaller Nodes in Parallel}
$C_{1, \dots, |X|} \gets$ empty arrays of length $\log_B n$\\
\parForEach{$v \in X$}{
  \parForEach{$1 \leq i \leq \log_B n$}{
    $u \gets$ the node $i$ levels above $v$\\
    $C_v[i] \gets$ 1 if $u$ is full, 0 otherwise\\
  }
}

\parForEach{$v \in X$}{
  \prefixsum{}$(C_v)$\\
}

\parForEach{$v \in X$}{
  \parForEach{$1 \leq i \leq \log_B n$}{
    $u \gets$ the node $i$ levels above $v$\\
    \lIf{$C_v[i] = i$}{
      Mark $u$ to be divided\tcp*[f]{Multiple threads may mark $u$ to be divided}\\
    }
  }
}

\parForEach{$v \in X$}{
  \parForEach{$1 \leq i \leq \log_B n$}{
    $u \gets$ the node $i$ levels above $v$\\
    \If{$u$ marked to be divided}{
      $y \gets$ number of threads which marked $u$\\
      Get one key from $y$ of its children\\
      \dividenode{}(u)\\
    }
  }
}

\Return{$T_1$}
\end{algorithm}

\begin{algorithm}[H]
\small
\caption{$\dividenode{}(v)$\label{algo:dividenode}}
\DontPrintSemicolon
\KwIn{A node $v$ containing more than $B$ key}
\KwOut{Node $v$ is divided into $b$ valid B-tree nodes, and outputs $b - 1$ keys}
\lIf{$v$ is root}{
  $p \gets$ an empty parent node
}
\lElse{
  $p \gets$ the parent of $v$
}
$b \gets \lceil \frac{|v|}{B} \rceil$\tcp*[f]{$|v|$ is the number of keys in $v$}\\
$b^\prime \gets \lceil \frac{|v|}{b} \rceil$\\
\parForEach{$i \in [1, b]$}{
  $v_i \gets$ a new node containing keys from $v$ in the range $[(i - 1) \cdot b^\prime + 1, i \cdot b^\prime)$\\
}
\Return{Every $i \cdot b^\prime$-th key in $v$ for all $i \in [1, b)$}
\end{algorithm}

\section{Additional Set Operations}
\label{sec:additional-set-operations}

The \intersection{} and \difference{} operations are similar to the
  \fastunion{} algorithm under the join-based framework.
Hence, due to the space limit, we postpone the details of these two operations here in the appendix.
The main difference lies in how subtrees are joined back.
  
\begin{theorem}
    Given two B-trees with sizes $m$ and $n\ge m$,
    there exists a parallel algorithm that returns a new B-tree containing the
    intersection and difference of the two input trees in
    $O\left(m\log_B\left(\frac{n}{m}\right)\right)$ I/O work,
    $O(\log_B m \cdot \log_2 \log_B n +\log_B n)$ and I/O span,
    where $B$ is the block size. 
\end{theorem}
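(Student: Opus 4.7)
The plan is to mirror the structure of \fastunion{} in \cref{algo:fastunion}, changing only how subtrees and splitter keys are reassembled. First, I would handle the recursive step: given input trees $T_1$ and $T_2$ with $|T_1|+|T_2|\ge m$ above the base-case threshold, choose $d-1$ global splitter keys with $d=\sqrt{n+m}$ using \twobinarysearch{}, apply \msplit{} to each of $T_1$ and $T_2$ along these same keys to obtain subtree pairs $(T_{1,i},T_{2,i})$, recurse in parallel to compute intersections (resp.\ differences) of each pair, and then reassemble. For \intersection{}, a splitter key $k_i$ survives into the output iff $k_i$ is present in \emph{both} $T_1$ and $T_2$; for \difference{} ($T_1\setminus T_2$), $k_i$ survives iff $k_i\in T_1$ and $k_i\notin T_2$. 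These membership tests for the $d-1$ splitters cost $O(\log_B n)$ work each and $O(\log_B n)$ span overall when done in parallel.

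For the reassembly step, I would use \mjoin{} but on the pruned list in which trees with no surviving separator between them become adjacent. The cleanest way is to introduce a variant of \mjoin{} that accepts an input sequence of trees $T_{1..d+1}$ together with a mask on the separator keys $k_{1..d}$; where a key is masked out, the two neighboring trees must be concatenated directly. This is a straightforward modification of the grouping step in \bjoin{} and \bjoinfast{} (\cref{sec:b-way-join,sec:fast-b-way-join}): when forming groups between consecutive tall trees, simply treat a missing key the same way we already treat the leftmost/rightmost subtree of a tall tree being replaced. All height, correctness, and cost arguments in \cref{lemma:multiway-join-rebalance-bounds}--\cref{thm:multiway-join-cost} go through verbatim, since the bounds depend only on the number of trees ($\le B$ per \bjoinfast{} call) and their heights, not on whether every separator slot is occupied.

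For the base case, when $m<\sqrt{n+m}$, I would use a work-inefficient variant analogous to \union{} in \cref{algo:fastunion}: use the $m$ keys of the smaller tree as splitters via \msplit{} on the larger tree, then for each resulting piece decide in parallel whether the associated splitter belongs in the output (for intersection, whether it lies in the smaller tree---trivially yes since it is drawn from it, so one actually uses the splitters of the smaller tree to test containment against the larger; for difference, the symmetric test). Invoking the masked \mjoin{} completes the step. The cost analysis of \cref{lem:union-base-case-cost} applies unchanged: \msplit{} contributes $O(m\log_B n)$ work and $O(\log_B n)$ span, \mjoin{} contributes $O(\log_B m\cdot\log_2\log_B n+\log_B n)$ span, and summing over base cases along the recursion tree yields $O(m\log_B(n/m))$ work by the leaf-dominated recurrence already used for \fastunion{}.

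The main obstacle I anticipate is purely bookkeeping rather than asymptotic: verifying that the masked \mjoin{} preserves the key invariant used throughout \cref{sec:fast-b-way-join}, namely that each tree's left/right spine pointer lists remain correct after adjacent concatenations replace missing separators. Because concatenating two trees of heights $h_1,h_2$ with no intervening key is already the degenerate case of \bjoinfast{} on an input with $b=1$ and the corresponding ``key slot'' vacated, it reduces to the fuse-and-divide analysis of \cref{lem:multiway-join-fuse-cost,lem:multiway-join-divide-up-bound,lem:multiway-join-rebalance-bounds}; spine maintenance follows from \cref{lem:multiway-join-update-pointers} applied to the surviving spines only. Combining the recursive span recurrence $S(n+m)=S(\sqrt{n+m})+O(\log_B m\cdot\log_2\log_B n+\log_B n)$ with the leaf-dominated work recurrence gives the stated bounds of $O(m\log_B(n/m))$ I/O work and $O(\log_B m\cdot\log_2\log_B n+\log_B n)$ I/O span for both \intersection{} and \difference{}.
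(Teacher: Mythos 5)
Your overall architecture matches the paper's: mask the splitter keys according to membership tests, recurse as in \fastunion{}, and reassemble with a modified \mjoin{}. However, there is a genuine gap in your reassembly step. You propose that where a separator key is masked out, ``the two neighboring trees must be concatenated directly,'' and you claim this reduces to the existing grouping/fusing machinery with the key slot simply vacated. It does not: a B-tree node with $c$ children must carry exactly $c-1$ separator keys, and every fuse in \bjoinfast{} attaches a tree \emph{together with} its paired key into a spine node. If you delete the key but keep both neighboring trees, the receiving node ends up with one more child than its keys can separate, so the result is not a valid B-tree and none of \cref{lemma:multiway-join-rebalance-bounds}--\cref{thm:multiway-join-cost} applies ``verbatim'' --- those lemmas all assume each tree after the first arrives with a separator. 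This is the classic distinction between a join with a middle key and a key-free concatenation (join2): the latter requires \emph{extracting} a replacement separator. The paper supplies exactly this missing idea: when $k_i$ is excluded, it finds the largest key $k'_i$ in $T_i$, splits it out of $T_i$, and substitutes $k'_i$ for $k_i$, so that \mjoin{} always runs on a fully keyed input and no new rebalancing or spine-maintenance argument is needed.

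The rest of your proposal is consistent with the paper's proof: the membership tests on the $d-1$ splitters, the recursive structure, the base cases (including the asymmetric treatment of difference depending on which tree is larger), and the reuse of the \fastunion{} work and span recurrences all match. Once you patch the reassembly step with the extract-a-replacement-key device (which costs only an extra $O(\log_B n)$-work split per excluded key and does not change the asymptotics), your argument goes through and yields the stated bounds.
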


\begin{proof}
  We first modify the \mjoin{} operation slightly to take in an array $d$ boolean
  values,
  where each boolean value indicates whether the corresponding key should be
  included in the final result.
  Suppose $k_i$ is not to be part of the final result.
  Then we instead find the largest key in $T_i,$
  denoted as $k^\prime_i,$
  and split on $T_i$ using $k^\prime_i.$
  Then we replace $k_i$ with $k^\prime_i,$
  and perform the join operation as normal.

  In the \intersection{} operation,
  after searching for the $d - 1$ split keys,
  we check if all $d - 1$ keys are in both trees.                                                                   
  If a key is in both trees,
  then we mark the key to be included in the final result.
  Then we perform the \msplit{} operation as normal,
  make our recursive calls,
  then use the modified \mjoin{} operation to combine the results,
  excluding any keys which did not appear in both trees.

  The base case of the \intersection{} operation is also very similar to the
  \union{} algorithm.
  Let $|T_1| \geq |T_2|,$
  as we can always swap the two trees if this is not the case.
  Then we use the keys from $T_2$ to search for the keys in $T_1.$
  Any keys which are not found in $T_1$ are discarded.
  All remaining keys are then joined together using the traditional \mjoin{}
  operation.

  For the \difference{} operation,
  if a split key is found to be in $T_1$ and $T_2$ during the main recursive
  algorithm,
  then it is marked as excluded,
  and all the same steps as the \intersection{} operation are performed.
  For the base case,
  we use a slightly different approach depending on which tree is larger.
  If $|T_1| \geq |T_2|,$
  then we split $T_1$ using the keys in $T_2,$
  and join with none of the keys from $T_2.$
  Otherwise,
  we use all keys in $T_1$ to search for the keys in $T_2,$
  and join all keys from $T_1$ which were not found in $T_2.$

  As such, the same analysis used to get the bounds from the \fastunion{} algorithm
  can also be applied to the \intersection{} and \difference{} operations and achieve the same bounds 
 .
\end{proof}

\end{document}